\newcommand{\R}{{\mathbb{R}}}
\newcommand{\offset}{\textbf{offset }}
\newcommand{\vo}{{\vec{o}}}
\newcommand{\vd}{{\vec{d}}}
\newcommand{\lopt}{{\ell^{opt}}}
\newcommand{\lon}{{\ell^{on}}}
\newcommand{\popt}{{p^{opt}}}
\newcommand{\pon}{{p^{on}}}
\newcommand{\ponx}{{p^{on}_x}}
\newcommand{\popty}{{p^{opt}_y}}
\newcommand{\pony}{{p^{on}_y}}
\newcommand{\e}{{\epsilon}}
\renewcommand{\d}{{d(\pon + \vo, \popt)}}
\spnewtheorem{mainth}[theorem]{Main Theorem}{\bfseries}{\itshape}
\spnewtheorem{observation}[theorem]{Observation}{\bfseries}{\itshape}
\spnewtheorem{clm}[theorem]{Claim}{\bfseries}{\itshape}
\title{On the Continuous CNN Problem}
\author{John Augustine\thanks{Work done in part while at Tata Research Development and Design Centre, Pune, India.} \and Nick Gravin}
\institute{School of Physical and Mathematical Sciences\\ Nanyang Technological University\\Singapore 637371.\\ \email{jea@ics.uci.edu, ngravin@gmail.com}}
\begin{document}
\maketitle

\begin{abstract}

In the (discrete) CNN problem, online requests appear as points in $\mathbb{R}^2$. Each request must be served before the next one is revealed. We have a server that can serve a request  simply by aligning either its $x$ or $y$ coordinate with the request. The goal of the online algorithm is to minimize the total $L_1$ distance traveled  by the server to serve all the requests. The best known competitive ratio for the discrete version is 879 (due to Sitters and Stougie). 

We study the continuous version, in which, the request can move continuously in $\mathbb{R}^2$ and the server must continuously serve the request. A simple adversarial argument shows that the lower bound on the competitive ratio of any online algorithm for the continuous CNN problem is 3. Our main contribution is an online algorithm with competitive ratio $3+2 \sqrt{3} \approx 6.464$. Our analysis is tight. The continuous version generalizes  the discrete orthogonal  CNN problem, in which every request must be $x$ or $y$ aligned with the previous request. Therefore, Our result improves upon the previous best competitive ratio of 9 (due to Iwama and Yonezawa).

\end{abstract}

\section{Introduction}

The $k$-server problem has been  influential in the development of online algorithms~\cite{DBLP:journals/csr/Koutsoupias09}. We have $k$ servers that can move around a metric space. Requests arrive in an online manner on various locations in the metric space. After each request arrives, one of the $k$ servers must move to the request location. The online algorithm must make this decision without any knowledge of the future requests.  The objective is to minimize the sum of the distances traveled by the $k$ servers.

A natural variant of the $k$-server problem, the (discrete) CNN problem, was introduced by Koutsoupias and Taylor~\cite{KoutsoupiasT04}. The name derives from the following illustrative example: consider a sequence of newsworthy events that occur in street intersections in Manhattan. A CNN news crew must cover these events with minimal movement. Since they have powerful zoom lenses, they only need to be at some point on either one of the two cross streets. More formally, we are given a sequence of requests as points from $\R^2$ that appear in an online manner. We have one server that can move around in $\R^2$.
To serve a request, the server must merely align itself to the $x$ or $y$ coordinate of the request. The objective is to minimize the total distance traveled by the server in $L_1$ norm. 

There is an equivalent  alternative definition that is also used in literature in which, instead of a single server that can move in 2D,  we have two independent servers with one restricted to move along the $x$-axis, while the other is restricted to move along the $y$-axis. Given an online request at $(a,b)$, either the $x$-axis server must move to $x=a$ or the $y$-axis server must move to $y=b$. The objective is to minimize the sum of distances moved by either servers. Notice that the two independent servers in different dimensions are equivalent to a single server that can move around in both dimensions. For this reason, the CNN problem is also called sum of two $1$-server problems~\cite{KoutsoupiasT04}.

\begin{figure}
  \begin{centering}
  \includegraphics[width=2in]{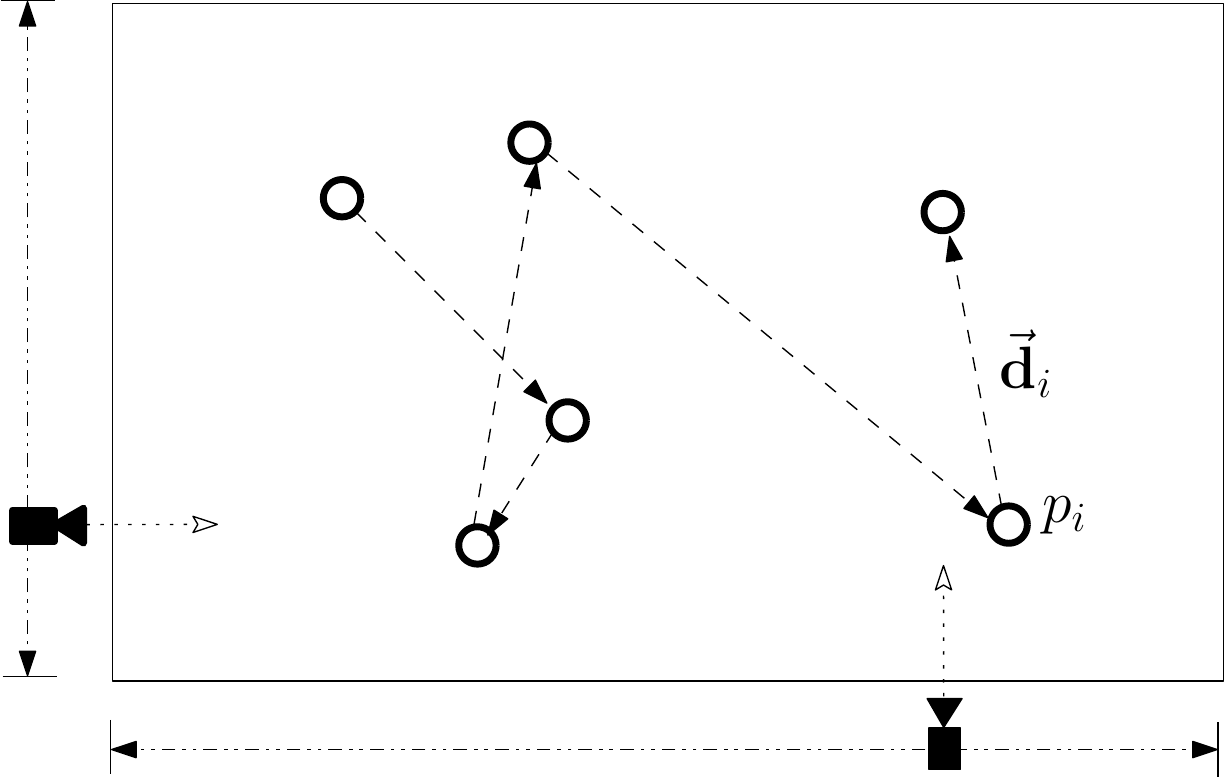}\\
  \caption{Illustration for the two server definition of the continuous CNN problem.} \label{fig:soccer}
    \end{centering}
\end{figure}
We introduce the continuous version of the CNN problem. We use the alternative two server definition to illustrate the continuous version. Consider the problem of covering the activities of a soccer match (see Figure~\ref{fig:soccer}). For the sake of simplicity in our illustration, let us have two  cameras on rails, one along the length (i.e., the $x$-axis server) and the other along the breadth (the $y$-axis server) of the field. Their orientations are fixed perpendicular to the direction of movement (of course, pointing into the field). As the ball is kicked around, at least one of the two cameras must track the ball continuously. Informally, the input is a request point moving along a continuous trajectory that is revealed in an online manner and a server must {\em continuously} align itself to the $x$ or $y$ coordinate of the request. 

We say that two points are $x$-aligned (respectively, $y$-aligned) if they share the same $x$ (respectively, $y$) coordinate. Also, we say that two points are aligned if they are either $x$-aligned or $y$-aligned. We are now ready to formally define the continuous CNN problem. For this formal definition (and for the rest of the paper) 
we have a single server that can move around in 2D 
space. Our input is an online sequence of pairs $r_i = (p_i, \vd_i)$, where $p_i$ is a point on $p_{i-1} + t \vd_{i-1}$, $t \ge 0$, and $\vd_i$ is a unit vector in some arbitrary direction. (In the soccer illustration, $p_i$ is the point on the previous trajectory where the ball is intercepted and $\vd_i$ is the new direction in which it is kicked.) Without loss of generality, the first point is assumed to be the origin. The server also starts at the origin. When an input pair $(p_i, \vd_i)$ is revealed, the server and $p_i$ are already aligned. The online algorithm must then commit to a continuous trajectory $T_i(t)$ of the server parameterized by $t$ such that for all $t\ge0$, $T_i(t)$ is aligned with $p_{i} + t \vd_{i}$.  After the online algorithm commits, the next request $(p_{i+1}, \vd_{i+1})$ arrives, the online server moves to the point on $T_i$ that aligns with $p_{i+1}$ along the trajectory $T_i$. The objective is to minimize the total distance traveled by the server in $L_1$ norm. 


\paragraph{History of CNN problems:} The discrete version of the CNN problem was discussed in several conferences and seminars in the late 1990s without any breakthroughs\cite{Chrobak03,IwamaY04}. It was formally introduced by Koutsoupias and Taylor~\cite{KoutsoupiasT04}\footnote{Conference version appeared in STACS 2000.}. They conjectured that this problem has a competitive algorithm along with a lower bound of $6 + \sqrt{17}$ on the competitive ratio of any deterministic online algorithm. Their conjecture was proved affirmatively in \cite{DBLP:conf/icalp/SittersSP03} by Sitters, Stougie, and de Paepe, albeit, with an algorithm that was $10^5$-competitive. For a fascinating discussion of the prevailing understanding of this problem in 2003, see~\cite{Chrobak03}. Eventually, Sitters and Stougie~\cite{SittersS06} made further improvements and provided a $879$-competitive algorithm. In fact, their work focussed on the {\em generalized $k$-server problem} which can be characterized as the sum of several $1$-server problems on arbitrary metric spaces. The orthogonal CNN problem was introduced by Iwama and Yonezawa~\cite{IwamaY04}. Each request (except the first one) must either share the $x$ coordinate or the $y$ coordinate with the previous request. With this restriction, they were able to improve the competitive ratio dramatically to $9$.

\paragraph{Our Contribution:} We focus on the continuous CNN problem, which is a generalization of the orthogonal CNN problem. We formalize this in the following Claim (with proof deferred to the Appendix).
\begin{clm}\label{clm:general}
Any $c$-competitive algorithm $\mathcal{A}$ for the continuous CNN problem can be applied to the orthogonal CNN problem in a manner that preserves the competitive ratio.
\end{clm}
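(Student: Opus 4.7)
The plan is to embed each orthogonal CNN instance into a continuous CNN instance in a way that preserves cost. Given orthogonal requests $q_1, q_2, \ldots, q_n$ (where every consecutive pair is axis-aligned), I construct a continuous input by setting $p_i = q_i$ and taking $\vd_i$ to be the unit vector from $q_i$ toward $q_{i+1}$, with the next pair $(p_{i+1},\vd_{i+1})$ revealed exactly when the moving request reaches $q_{i+1}$. I feed this input online to $\mathcal{A}$, noting that at the moment $q_i$ is revealed to the orthogonal algorithm the prefix $(p_1,\vd_1),\ldots,(p_{i-1},\vd_{i-1})$ is fully determined, so $\mathcal{A}$ has already committed to the trajectory $T_{i-1}$, and I declare the orthogonal online server's position at request $q_i$ to be $s_i^{on} := T_{i-1}(|q_i-q_{i-1}|)$. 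Continuous alignment forces $s_i^{on}$ to be $x$- or $y$-aligned with $q_i$, so this is a feasible orthogonal schedule.

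The competitive ratio is preserved provided two inequalities hold. First, between consecutive request moments, $\mathcal{A}$'s continuous curve from $s_i^{on}$ to $s_{i+1}^{on}$ has $L_1$-length at least $\|s_i^{on}-s_{i+1}^{on}\|_1$, so summing gives that my extracted orthogonal online cost is at most the total $L_1$-length of $\mathcal{A}$'s trajectory. Second, every orthogonal offline schedule $s_1^*,\ldots,s_n^*$ of cost $\sum_i\|s_i^*-s_{i+1}^*\|_1$ can be lifted to a continuous feasible trajectory of $L_1$-length at most that cost (up to an arbitrarily small additive slack), so the continuous offline infimum is at most the orthogonal optimum. Chaining these bounds with the competitiveness of $\mathcal{A}$ yields
\[
\text{orth-online} \;\le\; \text{cont-online} \;\le\; c\cdot\text{cont-opt} \;\le\; c\cdot\text{orth-opt},
\]
which is exactly the required $c$-competitiveness for the orthogonal problem.

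The main obstacle is the lifting in the second inequality, since the continuous alignment constraint must hold at every instant, not just at the discrete request times. I handle each segment $q_i\to q_{i+1}$ (say, horizontal with common $y$-coordinate $y^*$) by a case analysis on whether each of $s_i^*$ and $s_{i+1}^*$ is $x$- or $y$-aligned with its endpoint. When a server sits on the line $y=y^*$, it simply moves along that line, keeping $y$-alignment for free. When the server is $x$-aligned but not $y$-aligned with $q_i$, the key maneuver is \emph{$x$-tracking}: the server matches the request's $x$-coordinate as the request sweeps along the segment and uses the room to slide its $y$-coordinate to $y^*$, preserving $x$-alignment throughout the transition. A direct arithmetic check realizes the orthogonal $L_1$ cost exactly in most configurations; the only awkward subcase is when $s_{i+1}^*$ lies on the far side of $q_i$ along $y=y^*$, forcing the $x$-tracking phase to be done over an arbitrarily short sub-interval and introducing an $\epsilon$-slack that vanishes in the limit. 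This suffices to give $\text{cont-opt}\le\text{orth-opt}$ and completes the argument.
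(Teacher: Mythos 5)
Your embedding $q_i \mapsto (p_i, \vd_i)$ and your extraction of the orthogonal online schedule from $\mathcal{A}$'s committed trajectories is exactly the paper's construction, and the first inequality --- that the extracted orthogonal cost is bounded by $\mathcal{A}$'s continuous cost, via the $L_1$ triangle inequality --- is also what the paper relies on. Where you do genuinely more work is the last link in your chain, the claim that the continuous offline optimum is at most the orthogonal offline optimum: the paper's one-line proof merely observes that the request trajectories coincide and that $\mathcal{A}$ ``can also be used for $I_{ortho}$,'' which compares online costs but never explicitly relates the two offline optima. Since the continuous offline server is \emph{more} constrained than the discrete orthogonal one (it must stay aligned at every instant, not only at request points), this inequality is genuinely non-automatic, and your lifting argument --- realizing each orthogonal hop by an aligned trajectory via $x$-tracking, with a per-hop $\delta$-slack that vanishes in the limit --- is the right, non-trivial ingredient that the paper leaves implicit. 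A small correction to your case analysis: the slack arises symmetrically, not only when $s_{i+1}^*$ lies beyond $q_i$, but also when $s_i^*$ lies beyond $q_{i+1}$ with $s_{i+1}^*$ $x$-aligned; and since the easy reverse inclusion (orthogonal optimum at most continuous optimum) also holds by the triangle inequality, the two offline optima are in fact equal, which is worth recording.
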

A typical adversarial argument gives us the following lower bound on the competitive ratio.
 \begin{clm}\label{clm:lb}
If there is a $c$-competitive algorithm for the continuous CNN problem, then $c \ge 3$ even on a unit square.
\end{clm}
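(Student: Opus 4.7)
The plan is to construct a three-phase adaptive adversary on the unit square that forces every online algorithm to pay at least $3$ while an optimal offline algorithm pays at most $1$. Throughout I will use the $L_1$ displacement lower bound $\int_0^T (|s_x'| + |s_y'|)\, dt \ge |s_x(T) - s_x(0)| + |s_y(T) - s_y(0)|$ to lower bound the online cost in each phase.

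In phase~1 I reveal $r_1 = ((0,0), (1,1)/\sqrt{2})$, so the request moves diagonally from $(0,0)$ to $(1,1)$. Alignment forces $s_x(t) = t$ or $s_y(t) = t$ for all $t \in [0,1]$, so the server must end this phase at either $(1, y_1)$ (case A) or $(x_1, 1)$ (case B). By reflecting the whole construction across the line $y = x$ I may assume case A; the displacement bound then gives phase-1 cost at least $1 + y_1$.

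In phase~2 I reveal $r_2 = ((1,1), (-1,0))$, sending the request along the top edge to $(0,1)$. Alignment forces $s_x(u) = 1 - u$ or $s_y(u) = 1$ on $u \in [0,1]$, so the server's end position must satisfy $s_x(1) = 0$ or $s_y(1) = 1$. Based on this I adaptively pick $r_3$: if the server ends at $(0, y_2)$ with $y_2 < 1$, I reveal $r_3 = ((0,1), (1,0))$ (sending the request back toward $(1,1)$), which by the displacement bound forces phase-3 cost at least $1 - y_2$; if the server ends at $(x_2, 1)$ with $x_2 > 0$, I reveal $r_3 = ((0,1), (0,-1))$ (sending the request down to $(0,0)$), forcing phase-3 cost at least $x_2$; at the degenerate end $(0, 1)$ either direction gives phase-3 cost zero. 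The key algebraic check is that in each branch the three phase costs sum to exactly $3$: $(1+y_1) + (1 + y_2 - y_1) + (1 - y_2) = 3$ in the first (the sub-case $y_2 < y_1$ only inflates the total via the absolute value), $(1+y_1) + ((1-x_2) + (1-y_1)) + x_2 = 3$ in the second, and $(1+y_1) + (2 - y_1) + 0 = 3$ in the degenerate case. Mixed or backtracking server trajectories, and end positions outside $[0,1]^2$, only increase these displacement-based bounds.

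For OPT I commit to the corner $(0, 1)$: during phase~1 track $s_y(t) = t$ up the $y$-axis to reach $(0, 1)$ at cost exactly $1$; the point $(0, 1)$ is then $y$-aligned with the top edge throughout phase~2, and in phase~3 it is either $y$-aligned with the top edge (if $\vd_3 = (1,0)$) or $x$-aligned with the left edge (if $\vd_3 = (0,-1)$), so OPT spends nothing after phase~1. Case B is handled symmetrically with OPT committing to $(1, 0)$. Thus OPT pays at most $1$ while every online algorithm pays at least $3$. The main obstacle I foresee is uniformly arguing that the arithmetic collapses to $3$ over all admissible online trajectories in phase~2; this is resolved by observing that the adversary's choice of $\vd_3$ depends only on which of the two alignment conditions $s_x(1) = 0$ or $s_y(1) = 1$ is satisfied at the end of phase~2, and that the displacement bound is insensitive to any backtracking or interior behavior of the trajectory.
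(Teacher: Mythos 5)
Your three-phase construction is sound as far as it goes: the displacement accounting in each branch is correct (modulo the slightly glib treatment of end positions outside $[0,1]^2$, where the right statement is that the three-phase \emph{total} still exceeds $3$, not that each per-phase bound survives), the adaptive choice of $\vd_3$ is legitimate since the online algorithm commits to its trajectory before the adversary picks the next interception point, and OPT's single move to $(0,1)$ works. In spirit this is a rigorous version of the paper's informal Figure~2 illustration, whereas the paper's formal route is to prove the bound for the discrete ``unit CNN'' problem (requests at corners of the unit square, each placed diagonally opposite the online server; OPT serves each consecutive triple with one move) and then adapt.

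The genuine gap is that you produce a \emph{single bounded instance} with $\mathrm{ALG}\ge 3$ and $\mathrm{OPT}\le 1$. Under the standard definition of $c$-competitiveness used in this literature, $\mathrm{ALG}(\sigma)\le c\cdot \mathrm{OPT}(\sigma)+\alpha$ for some constant $\alpha$, such an instance proves nothing (take $\alpha=3$). To conclude $c\ge 3$ you need a family of instances on which $\mathrm{OPT}\to\infty$ while $\mathrm{ALG}/\mathrm{OPT}\to 3$; the paper's proof of Theorem~\ref{thm:lb} is explicitly structured for this (``for every three steps of the online algorithm, the offline server requires at most 1 move''), and its tight examples are infinite sequences. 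Your round does not obviously reset: at its end the online server and OPT do not coincide, and your opening phase requires both servers to coincide with the request. The fix is standard but must be said: after phase~3, oscillate the request in an ``L'' around OPT's corner $(0,1)$ until the online server either pays an arbitrarily large amount or comes within $\e$ of $(0,1)$; in the latter case the configuration (both servers and the request essentially coincident at a corner) is equivalent to the initial one up to $O(\e)$, so the round can be replayed indefinitely, giving $c\ge 3-\e$ for every $\e>0$ and hence $c\ge 3$. Without this (or some equivalent iteration argument) the proof does not establish the claim.
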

In Section~\ref{sec:prelim}, we introduce a simpler problem called the unit CNN problem and prove a lower bound of 3 on its competitive ratio in Theorem~\ref{thm:lb}. The proof of Theorem~\ref{thm:lb} can be easily adapted for Claim~\ref{clm:lb}.

We now provide an example that informally illustrates how we get a lower bound of 3 on the competitive ratio of the continuous CNN problem; see Figure~\ref{fig:lb}. Consider the unit square with both the optimal offline server and the online server at the top-left corner. In this adversarial example, the request moves to the bottom-right corner so that the online server is forced to choose between either a clockwise or counter-clockwise direction. Assume, without loss of generality, that it chooses the clockwise direction and moves to the top-right. The offline server, however, makes a single move down to the bottom-left. Suppose now the request moves around repeatedly in the left and bottom edges of the unit square, i.e., it makes a repeated ``L" shaped move. Clearly, the  offline server is already at a ``sweet spot'' and therefore stays unmoved. The online server, however, must correct its position and move to the sweet spot to offset its disadvantage. Notice that the online server moved three units of distance while the optimal offline server just needed just one.
\begin{figure}
  \begin{centering}
  \includegraphics[width=2in]{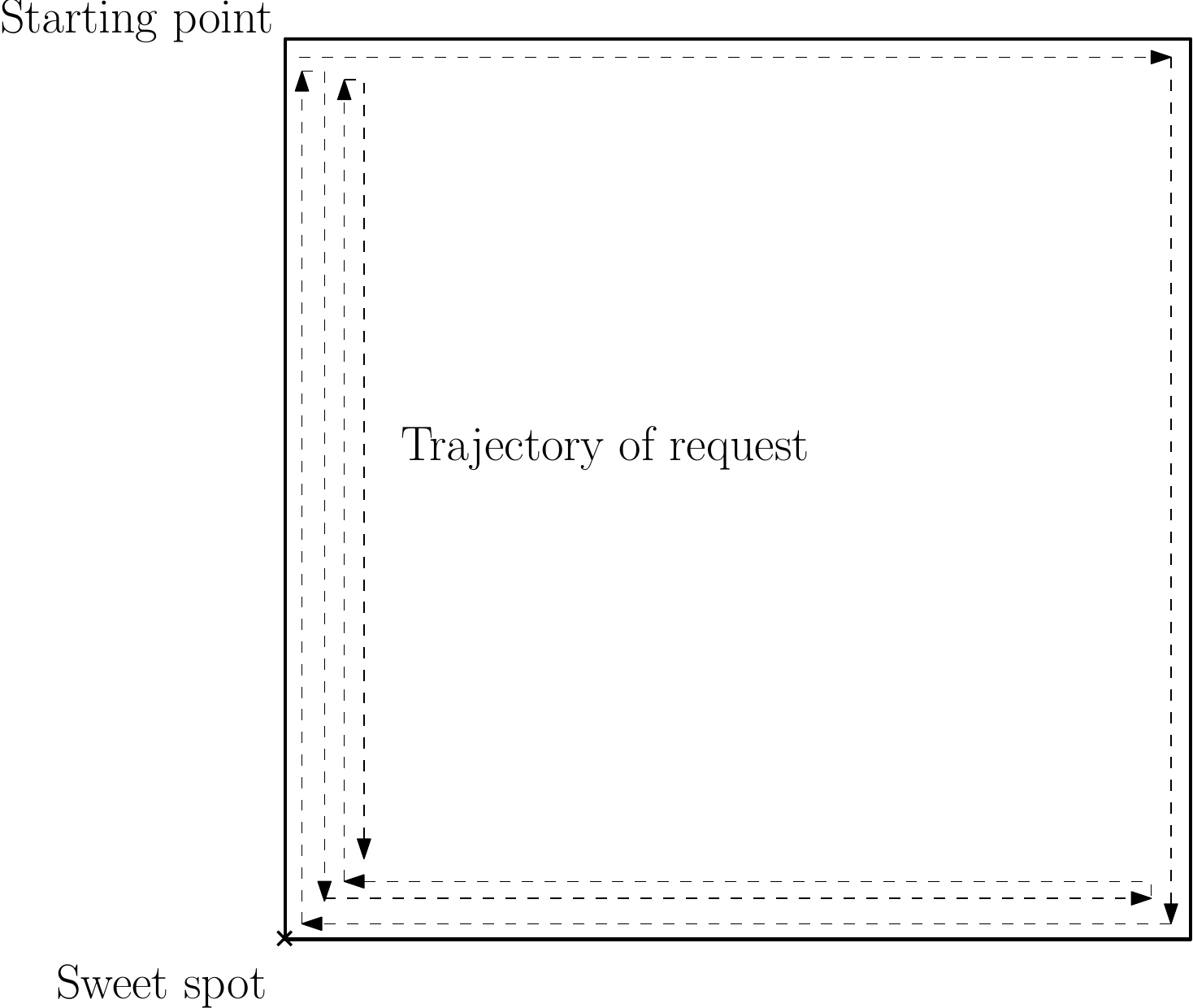} \hspace{0.5in}
  \includegraphics[width=2in]{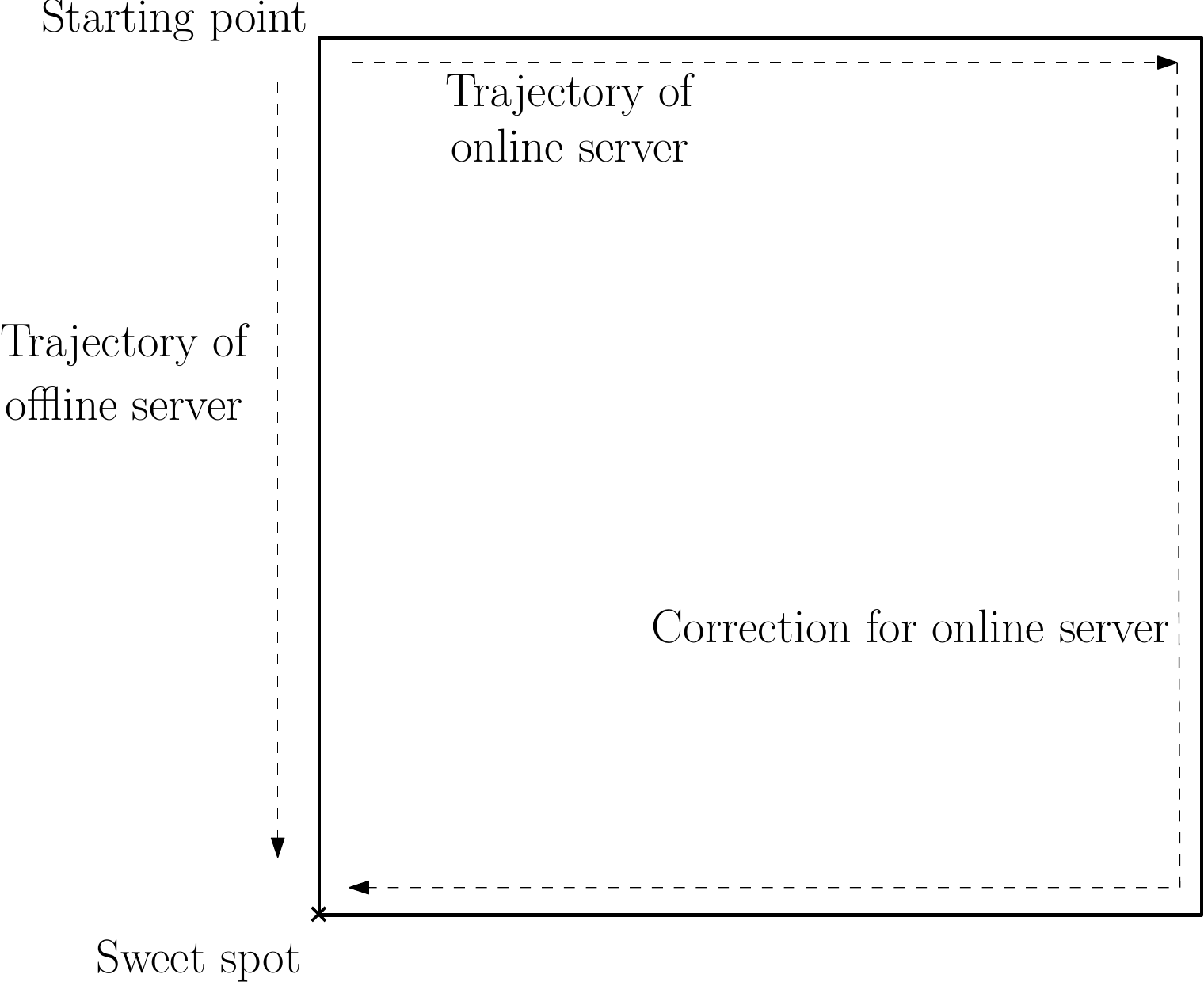}\\
  \caption{The figure on the left shows the request trajectory. The figure on the right shows the trajectory of online and offline servers.
  } \label{fig:lb}
    \end{centering}
\end{figure}

The significant contribution of our paper is an online algorithm for the continuous CNN problem with a competitive ratio of  $3+2\sqrt{3} = 6.464$. 
In light of Claim~\ref{clm:general}, our result improves upon the $9$-competitive algorithm for the orthogonal CNN problem~\cite{IwamaY04}.
Our algorithm alternates between two phases, namely, the {\em bishop phase} and the {\em rook phase.} Hence, we call it the Bishop-Rook algorithm or just the BR algorithm. 
Our analysis using a non-decreasing potential function is non-trivial.
Finally, we show that our analysis is tight by constructing input instances for which the competitive ratio is realized.

\paragraph{Organization of the paper:} 
In Section~\ref{sec:prelim}, we introduce a simplified problem called the unit CNN problem. We prove a lower bound of 3 on the competitive ratio of algorithms for the unit CNN problem, from which, the lower bound of the continuous CNN problem follows quite immediately. 
In Section~\ref{sec:BR} we present the {\tt BR} Algorithm for the continuous CNN problem. We analyze the {\tt BR} algorithm in Section~\ref{sec:analysis} and show that it has a competitive ratio of $(3 + 2 \sqrt{3}) \approx 6.464$.

\section{Preliminaries: Lower Bound}

\label{sec:prelim}
The lower bound for the competitive ratio of the continuous CNN problem (see Claim~\ref{clm:lb}) is 3. As it turns out, this lower bound shows up in a much simpler problem that we call the {\em unit CNN problem}. We provide the lower bound proof for the unit CNN problem (Theorem~\ref{thm:lb}) which easily encompasses Claim~\ref{clm:lb}.  
For the sake of completeness, we state the upper bound for the unit CNN problem, but defer all details and proofs to the Appendix.

In the unit CNN problem, a sequence of requests, $(r_1, r_2, \ldots )$, appear online as points in $\R^2$. We have a server that must serve each request $r_i$ by moving to a location that is aligned vertically or horizontally with the request. Each time the server moves either horizontally or vertically, it has to pay \$1 (regardless of the distance moved). If the move has both a horizontal and a vertical component, it pays \$2. The objective of the server is to minimize the total payment. We also consider the restricted version called the {\em unit orthogonal CNN problem} in which any pair of consecutive requests must either share an $x$ or $y$ coordinate.

We begin with a simple observation that we state without proof because, in essence, it has been noticed before by Iwama and Yonezawa~\cite{IwamaY04}. Define a sequence of moves by any algorithm to be {\em frugal} if, for each move in the sequence, the payment is never more than the minimum required to serve the current outstanding request. Therefore, by definition, no frugal sequence of moves will include a \$2 move.
\begin{observation}\label{obs:frugal}
Given any sequence of moves that pays \$d to serve the input sequence of requests, there is a frugal sequence that also pays at most \$d to serve all the requests.
\end{observation}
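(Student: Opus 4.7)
The plan is to construct a frugal sequence $F$ from any given sequence $A$ of cost $d$ via a coupling that preserves cost coordinate-wise. For each request $r_i = (x_i, y_i)$, label $a_i \in \{x, y\}$ to be an axis along which $A$'s server is aligned with $r_i$ after $A$'s $i$-th move (such an axis exists since $A$ serves $r_i$; pick arbitrarily if both work). Define $F$ by the rule: at request $r_i$, if $F$'s server is already aligned with $r_i$ along axis $a_i$, do nothing; otherwise, make the single \$1 move along axis $a_i$ that produces alignment. By construction every $F$-move costs at most \$1 and is made only when $F$ is not yet aligned with $r_i$, so $F$ is frugal and in particular contains no \$2 moves.

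It remains to verify that $F$'s total cost is at most $d$, which I would do in each coordinate separately. Focus on $x$: let $i_1 < \cdots < i_m$ enumerate the steps with $a_i = x$. Then $F$'s $x$-coordinate equals $x_{i_k}$ right after step $i_k$ and is unchanged otherwise, so $F$'s total $x$-cost is the number of indices $k \in \{1, \ldots, m\}$ with $x_{i_k} \ne x_{i_{k-1}}$, setting $x_{i_0} := 0$. Meanwhile $A$'s $x$-coordinate (which also starts at $0$) must equal $x_{i_k}$ after step $i_k$, because $a_{i_k} = x$; hence each $k$ with $x_{i_k} \ne x_{i_{k-1}}$ forces at least one $x$-change somewhere in $A$'s trajectory between steps $i_{k-1}$ and $i_k$, so $A$'s $x$-cost is at least $F$'s $x$-cost. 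The symmetric argument in $y$, and summing over the two coordinates, yields $F$'s cost $\le d$.

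The only subtle point is the last inequality, namely that a one-dimensional trajectory starting at $0$ and visiting a prescribed sequence of values at prescribed times must incur at least one unit cost for every distinct-consecutive pair in that prescribed subsequence; this is elementary but must be stated cleanly to cover all interleavings of $x$- and $y$-alignment steps in $A$'s execution, including the case where $A$ uses \$2 moves or wastes a horizontal component that never corresponds to any $x$-alignment step.
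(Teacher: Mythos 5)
The paper states this observation without proof (attributing the idea to Iwama and Yonezawa), so there is no paper argument to compare against; I evaluate your proof on its own terms. The coordinate-wise cost bound is sound: you are right that $A$'s $x$-coordinate equals $x_{i_k}$ right after each step with $a_{i_k}=x$, so each $k$ with $x_{i_k}\ne x_{i_{k-1}}$ charges a unit of $A$'s $x$-cost in a disjoint window $(i_{k-1},i_k]$, and the subtlety you flag at the end (a 1D trajectory passing through prescribed values at prescribed times) is indeed elementary and not where the trouble lies.

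The genuine gap is in the sentence ``is made only when $F$ is not yet aligned with $r_i$, so $F$ is frugal.'' Your rule only tests alignment \emph{along axis $a_i$}; it never checks the other axis. So $F$ can pay \$1 while already serving $r_i$, which violates frugality as the paper uses it (``we can assume that the server will not move if it is already aligned to the request''). Concretely, let $r_1=(1,0)$ and suppose $A$ moves from $(0,0)$ to $(1,5)$; then $A$ is $x$-aligned only, forcing $a_1=x$, and your $F$, sitting at $(0,0)$, pays \$1 to reach $x=1$ even though $(0,0)$ is already $y$-aligned with $r_1$, where the minimum required payment is \$0. To repair this you must make $F$ fully lazy (move along $a_i$ only when $F$ is aligned with $r_i$ along \emph{neither} axis), and then the cost comparison no longer comes for free: $F$'s $x$-position now realizes only a subsequence of $0,x_{i_1},x_{i_2},\dots$, so you additionally need the (true, but not stated) lemma that deleting entries from a sequence cannot increase the number of distinct-consecutive pairs, with a short charging argument to pin down which pair each lazy move is charged to. Without the laziness and that extra lemma, the proof does not establish the observation as the paper needs it.
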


\begin{theorem}\label{thm:lb}
If there is a $c$-competitive algorithm for the unit CNN problem, then $c \ge 3$ even when the requests arrive in an orthogonal manner.
\end{theorem}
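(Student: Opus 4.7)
The plan is to adapt the continuous lower-bound strategy (Figure~\ref{fig:lb}) to the discrete orthogonal setting, invoking Observation~\ref{obs:frugal} to restrict attention to frugal online algorithms, so that online cost is simply the number of single-axis moves. Given any such online algorithm, I would construct an adaptive orthogonal request sequence together with an oblivious offline strategy that together witness a $3\!:\!1$ cost ratio on a constant-size gadget, then iterate the gadget $N$ times to drive the competitive ratio to at least $3$.

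The gadget consists of two phases. In the \emph{forcing phase}, both servers start at the ``origin'' corner $(0,0)$ and the adversary issues the orthogonal prefix $(0,0),(0,1),(1,1)$. The first two requests are aligned with $(0,0)$, but the third forces a frugal \$1 move to one of the two single-axis targets $(1,0)$ or $(0,1)$. By the $x\leftrightarrow y$ symmetry of the construction, I may assume the online lands at $(1,0)$; the offline, knowing the future, instead moves to $(0,1)$ at the same cost of \$1. In the \emph{penalty phase}, the adversary commits to the ``L'' $L=\{x=0\}\cup\{y=1\}$ through $(0,1)$ and sends $(0,1)$ (orthogonal to $(1,1)$ via $y=1$). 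The offline at $(0,1)$ is already aligned, but the online at $(1,0)$ is not and must pay a second \$1 to move frugally to either $(0,0)$ or $(1,1)$. The adversary adapts to that choice: it sends $(2,1)$ (orthogonal via $y=1$) if online went to $(0,0)$, and $(0,2)$ (orthogonal via $x=0$) if online went to $(1,1)$. In both sub-cases the new request lies on $L$ (so the offline at $(0,1)$ remains aligned) but is unaligned with the online, forcing a third \$1 move. The mirror-image construction handles the case that the online's forcing move was to $(0,1)$ instead.

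To lift the single-gadget $3\!:\!1$ instance to the asymptotic competitive lower bound, I would concatenate $N$ translated copies of the gadget on fresh unit squares, chaining them so that orthogonality is preserved at the boundaries: a frugal online minimizing its cost within the gadget ends at the offline's sweet-spot corner, and the next gadget can be placed using that corner as its new origin. Over $N$ iterations the online pays at least $3N$ and the offline pays exactly $N$, ruling out $c<3$. Claim~\ref{clm:lb} follows by reading the same orthogonal request sequence as a piecewise-axial continuous trajectory that momentarily dwells at each corner. The main obstacle is verifying the exhaustiveness of the penalty-phase case split: for each frugal online response within the four corners of the unit square, the adversary must have an orthogonal continuation that further penalizes the online without costing the offline. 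The analysis reduces to a short finite check, since frugality together with the $2\times 2$ grid of candidate online positions keeps the state space small, and a suboptimal online landing outside the sweet spot only pays \emph{more} per gadget, which strengthens rather than weakens the bound.
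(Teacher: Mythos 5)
Your core gadget is fine — the five orthogonal requests $(0,0),(0,1),(1,1),(0,1)$ followed by $(2,1)$ or $(0,2)$ really do extract \$3 from any frugal online server while a fixed offline server at $(0,1)$ pays only \$1 — but the iteration argument has a genuine gap, and it is a different (more fragile) route than the paper's. The paper restricts all requests to the four vertices of a single unit square, lets the adversary place each new request diagonally opposite the online server's current vertex (which is automatically orthogonal to the previous request and forces a \$1 frugal move at \emph{every} step), and then bounds the offline cost by a direct amortized argument: on four vertices, the offline server can serve any three consecutive requests $r_{i-1},r_i,r_{i+1}$ with at most one \$1 move (hop to $r_i$ if it can; otherwise it already sees $r_{i-1}$, and one hop to $r_{i+1}$ covers $r_i$ and $r_{i+1}$). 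This gives online $\ge n$ and offline $\le n/3$ with no gadget boundaries and hence no chaining to worry about.

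The gap in your iteration is the assertion that ``a frugal online minimizing its cost within the gadget ends at the offline's sweet-spot corner.'' That is false: within the gadget the frugal online pays \$3 regardless of which of the two frugal choices it makes at each forced step, so cost does not steer it to $(0,1)$; it can just as well end at $(2,0)$ or $(1,2)$. You partially acknowledge this and gesture at continuing with ``L''-requests, but that continuation has its own failure mode: once the online server has reached a point such as $(0,2)$ while the last request sits on the $x=0$ leg of $L$ at some $(0,b)$ with $b\neq1$, every orthogonal request that stays on $L$ has $x=0$ and is already served by the online server, so the adversary cannot extract further cost without leaving $L$ — and leaving $L$ makes the offline server move too, which destroys the per-gadget $3\!:\!1$ accounting. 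This is repairable (one must argue that the adversary always routes through $(0,1)$ before branching, and that the extended play either terminates at the sweet spot or drives the ratio above $3$), but it is not the ``short finite check'' you describe. Relatedly, the claim that frugality confines the online to a $2\times2$ grid contradicts your own gadget, which sends the request to $(2,1)$ and $(0,2)$ and allows the online server to reach $(1,2)$, $(2,0)$, etc. The paper sidesteps all of this by never leaving the four vertices and never needing to reset to a common starting corner.
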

\begin{proof}
For this proof, we limit ourselves to input instances in which the requests can only appear in the vertices of the unit square with the origin at its bottom left.
Furthermore, we restrict the requests to appear in an orthogonal manner, i.e., each request (except the first one) shares either an $x$ or $y$ coordinate with the previous request.

Given Observation~\ref{obs:frugal}, we can assume that the server will not move if it is already aligned to the request.
Notice  that an adversary can force the online
algorithm to move at each step. We keep server one step away from each
request $r_i$ by placing next request $r_{i+1}$ diagonally opposite to
the current position of the server.

Given such a sequence of requests, an offline algorithm splits it into
consecutive triples. The offline server moves at most once to 
service each triple $r_{i-1}$, $r_i$ and $r_{i+1}$. 
If possible, the offline server moves to $r_i$ in one step --- it can service $r_{i-1}$, $r_i$ and $r_{i+1}$ from $r_i$.
Such a one step move to $r_i$ is not possible only when the server and $r_i$ are diagonally opposite each other, so the server must be within 1 hop of both $r_{i-1}$ and $r_{i+1}$. It simply serves $r_{i-1}$ from the current position and then hops to $r_{i+1}$ to serve both $r_i$ and $r_{i+1}$.

So, for every three steps of the online algorithm, the offline server requires at most 1 move, thereby completing the proof. \qed
\end{proof}

\begin{clm} \label{clm:4c}
There exists a 4-competitive algorithm for the unit CNN problem
\end{clm}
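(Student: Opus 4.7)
The plan is to exhibit a simple deterministic online algorithm and analyze it via an amortized (potential function) argument, showing its cost is at most four times that of the offline optimum. First, I would invoke Observation~\ref{obs:frugal} to restrict both the online algorithm and the offline optimum to frugal executions; then every move is single-axis and costs exactly \$1, and the only decision the online server has to make, on any request $r_i = (a_i, b_i)$ not already aligned with its position $(x_s, y_s)$, is whether to update $x_s \leftarrow a_i$ (a horizontal move) or $y_s \leftarrow b_i$ (a vertical move).

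The natural candidate algorithm is an ``alternating'' rule: maintain a bit $\phi \in \{H, V\}$ initialized to, say, $H$; on each forced move, move along the axis indicated by $\phi$ and then flip $\phi$. A quick adversarial calculation against the ``always horizontal'' fixed-axis policy shows it is not competitive at all (the adversary can place requests that force the online server but keep the offline server at a fixed $y$), which motivates alternation; conversely, alternating guarantees that any two consecutive forced online moves refresh both coordinates of the online server from recent request values, so the online server cannot drift arbitrarily far from any frugal offline configuration.

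For the analysis, I would design a potential $\Phi$ on the joint state built from indicators such as $\mathbf{1}[x_s \neq x^*]$ and $\mathbf{1}[y_s \neq y^*]$ (where $(x^*, y^*)$ is the offline position), augmented by a bounded term that tracks how ``in sync'' the flag $\phi$ is with the offline server's next-preferred axis, plus perhaps a capped counter of forced online moves since the last offline move. The goal is to verify, by case analysis on the four scenarios for the pair (online moves?, offline moves?) at each request, the step-wise inequality
\begin{align*}
\Delta\Phi + (\text{online cost at this step}) \le 4 \cdot (\text{offline cost at this step}).
\end{align*}
Telescoping over the request sequence and using $\Phi \ge 0$ then yields $A \le 4\,O + \Phi_0$, giving the claimed 4-competitiveness up to an additive constant absorbed in the initial potential.

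I expect the main obstacle to be the case where the online server is forced to move but the offline server does not: then $\Phi$ must strictly decrease, which amounts to showing that each alternating move really does make progress toward aligning with the (unknown) offline position. The cleanest way I see to handle this is to partition online moves into phases between consecutive offline moves and exploit the fact that within such a phase every request is aligned with $(x^*, y^*)$, so each successive pair of alternating online moves substantially constrains the online--offline discrepancy. Tuning $\Phi$ (and if needed the tie-breaking in the alternation rule) so that at most four forced online moves can be amortized against each offline move is the technical heart of the argument, and is presumably what pins the competitive ratio at exactly $4$.
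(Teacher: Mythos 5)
Your proposed algorithm (the alternating $H/V$ rule) is not competitive, so the potential-function analysis you sketch cannot be made to work for it. Here is a concrete adversary. Start both servers at the origin with $\phi = H$, and let the offline server move once to $(1,1)$. Now feed the sequence $r_1=(2,1)$, $r_2=(1,3)$, $r_3=(4,1)$, $r_4=(1,5)$, $r_5=(6,1)$, $r_6=(1,7), \ldots$ (each request lies on $x=1$ or on $y=1$, so the offline server at $(1,1)$ serves all of them without further movement). Your server is not aligned at any step, so it moves every time: $(0,0)\to(2,0)\to(2,3)\to(4,3)\to(4,5)\to(6,5)\to(6,7)\to\cdots$, paying \$1 per step without bound while the offline server pays \$1 total. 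This also refutes the key assertion in your second paragraph that alternating guarantees the online server ``cannot drift arbitrarily far from any frugal offline configuration'': after each consecutive $H$-then-$V$ pair, the online server sits at a point whose $x$-coordinate is copied from a request on $y=1$ and whose $y$-coordinate is copied from a request on $x=1$, and neither coordinate is constrained by the offline position $(1,1)$, so the adversary can push both away arbitrarily. Tie-breaking cannot save this: there are no ties in the example.

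The missing idea --- and the one the paper actually uses --- is that the online server must explicitly compute a candidate \emph{sweet spot} from the request coordinates it has seen and walk toward it, rather than react with a fixed parity that ignores where the requests came from. The paper's algorithm runs in cycles: after aligning with the first request $r_1$, the next request $r_2$ (and, if needed, $r_3$) pin down the unique point from which an offline server could be serving the whole cycle without moving, or else establish that no such point exists. The two structural subcases are $y_1=y_2$ (so the candidate lies on $y=y_1$ and is then fixed by $x_3$) and $y_1\neq y_2$ (so the candidate is one of $(x_1,y_2)$ or $(x_2,y_1)$, disambiguated by $r_3$). The online server reaches the candidate in at most four \$1 moves and parks there; the cycle ends exactly when the offline server is provably forced to pay at least \$1. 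Competitiveness then follows from a direct per-cycle charging argument --- online at most \$4, offline at least \$1 --- with no potential function at all, which is considerably simpler than the amortization you were anticipating.
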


\begin{clm}\label{clm:3c}
There exists a 3-competitive algorithm for the orthogonal unit CNN problem.
\end{clm}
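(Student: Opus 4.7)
\textbf{Proof plan for Claim~\ref{clm:3c}.}
The plan is to exhibit a frugal deterministic algorithm whose competitive ratio matches the lower bound of $3$ from Theorem~\ref{thm:lb}, and to analyze it via an amortized charging scheme modeled on the offline triple-strategy used in that proof. By Observation~\ref{obs:frugal}, I may restrict attention to frugal algorithms, so the server moves only when it is not already aligned with the incoming request, and always by a single \$1 step. The algorithmic decision at every forced move is therefore binary: align the server's $x$-coordinate, or align its $y$-coordinate, with $r_i$. The algorithm I propose makes this choice by a \emph{match-the-shared-axis} rule: upon a forced move at step $i$, the server aligns along the axis shared by $r_{i-1}$ and $r_i$. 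This leaves the server in the ``cheap'' state for any immediate continuation of the same orthogonal pattern, so that a run of V-steps (or of H-steps) incurs at most one forced move.

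To establish the $3$-competitive ratio, I would carry out a $3$-to-$1$ amortization paralleling the offline grouping used in Theorem~\ref{thm:lb}. Concretely, I would maintain a potential $\Phi$ on the joint configuration of the online and offline servers encoding two pieces of information: whether the online server's alignment type (X- vs.\ Y-aligned with $r_i$) agrees with the axis shared by $r_{i-1}$ and $r_i$, and a small non-negative ``credit'' that accumulates when the offline server moves and is spent when the online server is forced to move while the offline server is stationary. The goal is to verify the amortized inequality
\[
\Delta C^{\text{ON}} + \Delta \Phi \;\le\; 3 \cdot \Delta C^{\text{OFF}}
\]
at every step, splitting by whether each of the two servers moves and by whether the shared axis is $x$ or $y$. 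The critical case is a forced online move with a stationary offline server: here $\Phi$ must drop by at least $1$, which follows because the match-the-shared-axis rule flips the online alignment type while the offline server's relative configuration is unchanged.

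The main obstacle is to define $\Phi$ with tight constants so that it remains non-negative throughout and pays for itself under every alignment pattern, particularly in stretches where the online server is forced at many consecutive steps. A purely combinatorial alternative that sidesteps the explicit potential is a direct grouping argument: partition the online forced moves into consecutive triples and, for each triple, exhibit a distinct offline forced move within a constant-sized window around it. This would use essentially the same structural fact as in the proof of Theorem~\ref{thm:lb}---that the only way an offline server can service three orthogonally-consecutive requests $r_{i-1},r_i,r_{i+1}$ without moving is to sit on a point simultaneously aligned with all three, a condition that is provably violated whenever the match-the-shared-axis rule is forced into three consecutive moves. Either route yields the competitive ratio of $3$, matching the lower bound and completing the proof.
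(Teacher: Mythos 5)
Your proposal is essentially the paper's proof: the match-the-shared-axis rule places the server exactly at $r_{i-1}$ (the last request it could see), which is the paper's algorithm, and your ``purely combinatorial alternative'' of charging each triple of forced online moves to one offline move is exactly the paper's analysis (the potential-function route you lead with is unneeded). One small correction: for a single triple the alignment condition is not violated (OPT can simply sit at $r_i$); the contradiction instead comes from a window of four requests whose two overlapping triples have distinct unique sweet spots $r_1 \neq r_2$, forcing OPT to move at least once per three forced online steps.
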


\section{The {\tt BR} Algorithm for the Continuous CNN Problem}

\label{sec:BR}

We now turn our attention to the main problem that we address in this paper --- the continuous CNN problem.
Recall that we formally defined the input as  an online sequence of pairs $(p_i, \vd_i)$. 
Informally, we treat the request as a point starting at the origin and moving to each subsequent $p_i$ in  straight line segments whose direction is given by the vector $\vd_i$. So we use the term {\em request trajectory} to refer to the path traversed by the request. The server's trajectory must stay aligned with request trajectory at all times. In this section, we describe the Bishop-Rook algorithm or just the {\tt BR} algorithm that alternates between two phases, namely, the Bishop phase and the Rook phase. As the name implies, the server moves diagonally during the Bishop phase. In the Rook phase, we treat the horizontal and vertical components of the server separately, leading to movements that mimic  Rooks in Chess. The algorithm switches between the phases when appropriate conditions (described subsequently for each phase) are met.

The key intuition behind the algorithm is the following. Suppose the offline server manages to get to a ``sweet spot" from which it can align with the request trajectory with little or no movement. Then, the online server also must home into that spot. Iwama and Yonezawa~\cite{IwamaY04} also exploit this idea. They get closer to a potential sweet spot using ``L" shaped moves --- hence, one can call it the Knight algorithm. 
To achieve this homing effect in the {\tt BR} algorithm, we define an \offset vector at the end of the bishop phase that, when added to the online server's position, will point to our candidate sweet spot. In the rook phase, we use the \offset vector to guide the online server to the sweet spot.

\begin{figure}
\begin{minipage}{4cm}
\center
\include{phase1}
\caption{Bishop phase}
\label{Phase1}
\end{minipage} \hfill
\begin{minipage}{5cm}
\center
\include{phase2Basic}
\caption{Rook phase without \offset update}
\label{Phase2Basic}
\end{minipage} \hfill
\begin{minipage}{3.5cm}
\center
\include{phase2}
\caption{Rook phase showing \offset update}
\label{Phase2}
\end{minipage} \hfill
\end{figure}

\subsubsection{Bishop Phase:}
During the bishop phase, as the name implies, the server moves diagonally making a $45^\circ$ angle with the axes. 
Without loss of generality, let the point $p_{i}$  be at $(0,0)$ and the online server be on the non-negative part of $y$-axis at $(0,h)$, so $h\ge0$; see Figure~\ref{Phase1}. Throughout the bishop phase, the server moves in a manner that maintains $x$-alignment with the request trajectory. Notice that this defines the $x$ component of the server movement. To ensure the diagonal movement of the bishop phase, the server also moves in the $-y$ direction. For every maximal $\delta x$ that the server moves in either the $+x$ or $-x$ direction, the server simultaneously moves a distance $|\delta x|$ in the $-y$ direction. If (and when) the position of server and request trajectory coincide, we terminate the bishop phase and switch to the rook phase. Let $(s_x, s_y)$ be the coordinates of the point at which they coincide. Then, the \offset vector $\vo = -s_x \vec{x}$, where $\vec{x}$ is the unit vector in the positive $x$ direction.

%

\subsubsection{Rook Phase:}
At the beginning of the rook phase, positions of server and request trajectory coincide. Without loss of generality we assume that \offset is in the $-x$ direction. We maintain two invariants throughout the rook phase. 
However, in so doing, we are judicious with the $L_1$ distance traveled by the server.
\begin{description}
  \item[$y$-alignment:] The server and request trajectory are always $y$-aligned. This fully defines the movement of the server along the $y$ direction because the server maintains the same $y$ coordinate as that of the request.
  \item[$x$ coordinate inequality:] The $x$ coordinate of the server is always less than or equal to the $x$ coordinate of the request trajectory. This invariant is more subtle. When the $x$ coordinate of the request trajectory is strictly greater than that of the server, the server's $x$ coordinate stays unchanged --- this is to ensure that we are judicious with the $L_1$ distance traveled. When the $x$ coordinates coincide and the request trajectory is moving in the $-x$ direction, then the server moves along with the request trajectory.
\end{description}

During the rook phase, the \offset vector $\vo$ decreases whenever the server moves. The rate of decrease depends on the horizontal and vertical components of the movement. The rate at which $|\vo|$ decreases is given by:
\[
|\vo| \leftarrow   \left\{
\begin{array}{l l}
  |\vo| - (1 +\sqrt{3}) |t| & \quad \text{if server and request move a distance $t$ vertically}\\
  |\vo| & \quad \text{if request moves but server does not}\\
  |\vo| - t & \quad \text{if server and request move a distance $t$ horizontally}\\
\end{array} \right.
\]

When $|\vo|$ reaches 0, we switch to the bishop phase.
Fig.~\ref{Phase2Basic} depicts the working of the rook phase, but does not show the change in \offset. Fig.~\ref{Phase2} shows how the \offset shrinks as the phase progresses.



\section{Analysis of the {\tt BR} Algorithm}

\label{sec:analysis}

To simplify the analysis, we assume that we are working on an instance of the continuous {\em orthogonal} CNN problem, i.e, all the direction vectors $\vd_i$ are orthogonal with respect to the axes. This does not affect our analysis because any straight line of arbitrary angle can be approximated by a series of infinitesimally small $x$ and $y$ components.

Before we proceed with the analysis, we make a simple observation that allows us to insert artificial points into the input sequence. Suppose we are given a sequence of input requests $I=((p_1, \vd_1), \ldots, (p_i, \vd_i), (p_{i+1}, \vd_{i+1}), \ldots)$. Consider the sequence $I'=(p_1, \vd_1), \ldots, (p_i, \vd_i), (p_i', \vd_i), (p_{i+1}, \vd_{i+1}), \ldots$, where $p_i'$ lies on the line segment between $p_i$ and $p_{i+1}$. Then any server trajectory for serving the request sequence $I$ will also serve $I'$ and vice versa.

Our analysis uses a potential function that is non-decreasing throughout the execution of the algorithm. We define a cycle to be the combination of a bishop phase and the subsequent rook phase. Recall that at the start of a cycle, the \offset is 0. We re-orient our view such that the next outstanding request is at the origin and the online server is at $(0,h)$, where $h \ge 0$. When re-orienting our view, we ensure that the potential remains unchanged. This is shown formally in Remark~\ref{rem:reorient}. The potential function $\Phi$ is a function of the $\offset$ and the parameters defined as follows:
\begin{description}
	\item[$\lopt$ and $\lon$] are the distances traveled by the optimal offline server and the online server, respectively,
	\item[$\popt$ and $\pon$] are the positions of the optimal offline server and the online server, respectively.
\end{description}
The potential function is given by
\begin{equation}
	\Phi = (3 + 2 \sqrt{3}) \lopt - 3 d(\pon + \vo, \popt) - \lon - |\vo| + f(|\vo|, \popt, \pon),
	\label{eq:pot}
\end{equation}
where $d(p, q)$ is the $L_1$ distance between points $p$ and $q$. To define $f$, we first define $h = \popty - \pony$, where $\popty$ and $\pony$ are the $y$ coordinates of $\popt$ and $\pon$. Now,

\[
f(\vo, \popt, \pon) =   \left\{
\begin{array}{l l}
  0 & \quad \text{if $h \le 0$}\\
  (6-2\sqrt{3})h & \quad \text{if $0 \le h \le |\vo|$}\\
  (6-2\sqrt{3})|\vo| & \quad \text{if $|\vo| \le h$}\\
\end{array} \right.
\]

\begin{theorem}\label{thm:main}
$\Phi$ is non-decreasing throughout the execution of the {\tt BR} algorithm and this implies a competitive ratio of $(3 + 2 \sqrt{3})$.
\end{theorem}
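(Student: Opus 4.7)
The plan is to prove the theorem in two parts. First I would derive the competitive ratio from the claim that $\Phi$ is non-decreasing: at initialization every quantity is zero so $\Phi_0 = 0$, and if $\Phi$ never decreases then rearranging~(\ref{eq:pot}) gives
\begin{equation*}
  \lon \;\le\; (3 + 2\sqrt{3})\lopt - 3 d(\pon+\vo,\popt) - |\vo| + f \;\le\; (3 + 2\sqrt{3})\lopt + f.
\end{equation*}
Since $f \le (6-2\sqrt{3})|\vo|$ and $|\vo|$ vanishes at every cycle boundary, the $f$-term contributes only a cycle-local additive error, which yields the asymptotic competitive ratio $3 + 2\sqrt{3}$.

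For the non-decrease, I would use the artificial-point observation preceding the theorem to decompose all motion into infinitesimal axis-parallel steps, and then case-split on the type of event: (i) an offline displacement, (ii) a combined request/server step during the bishop phase, (iii) a combined request/server step during the rook phase, or (iv) a phase transition or re-orientation (the latter handled by Remark~\ref{rem:reorient}). Case (i) is a direct computation: an offline step of length $\delta$ contributes $(3 + 2\sqrt{3})\delta$ via $\lopt$, and the worst combined decrease of $-3 d$ and $f$ in any consistent sub-case is dominated by this gain---for instance, a horizontal move that increases $d$ by $\delta$ and leaves $f$ unchanged yields $\Delta\Phi = 2\sqrt{3}\,\delta \ge 0$, and a downward step in the middle regime of $f$ (which simultaneously decreases $d$ by $\delta$ and decreases $f$ by $(6-2\sqrt{3})\delta$) yields $\Delta\Phi = 4\sqrt{3}\,\delta \ge 0$. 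For case (ii) the cleanest argument reads $\vo$ as being continuously tracked during the bishop phase so that $\vo = -s\vec{x}$, where $s$ is the cumulative horizontal online displacement; this is consistent with the end-of-phase value $\vo = -s_x\vec{x}$ in the algorithm, so the bishop-to-rook transition is continuous. Under this interpretation $\pon+\vo$ stays on the initial $y$-axis while its $y$-coordinate descends at the bishop's rate, and direct substitution into~(\ref{eq:pot}) shows that the diagonal cost $\Delta\lon = 2\delta$ is cancelled by $\Delta(-|\vo|) = -\delta$ together with the change in $-3 d$ and $\Delta f$, giving $\Delta\Phi = 0$ in the worst sub-case and $\Delta\Phi > 0$ otherwise.

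The main obstacle is case (iii), the rook phase, which requires a sub-case analysis on the direction of infinitesimal request motion and on which of the three regimes of $f$ the pair $(h, |\vo|)$ currently occupies. For vertical motion by $\delta$ the server follows the request ($\Delta\lon = \delta$) and the update rule drops $|\vo|$ by $(1+\sqrt{3})\delta$, shifting $\pon+\vo$ by $((1+\sqrt{3})\delta, \pm\delta)$; using the invariants $y_n = y_r$ and $x_n \le x_r$ to sign $\Delta d$, the critical inequality
\begin{equation*}
  (3+2\sqrt{3})\Delta\lopt - 3\Delta d - \Delta\lon + (1+\sqrt{3})\delta + \Delta f \;\ge\; 0
\end{equation*}
reduces to a $\sqrt{3}$-cancellation that holds with equality in the extremal configuration where the offline is $y$-aligned with the request and $x_n - |\vo| > x_o$. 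This identity is precisely why the coefficient $1+\sqrt{3}$ appears in the offset-update rule and why $6 - 2\sqrt{3}$ appears in $f$, and it is also what forces the bound to be tight. For horizontal motion I would split into the stationary-server sub-case (where $\Delta\lon = 0$ and $x_n \le x_r$ forces $\Delta d \le 0$) and the server-follows sub-case in the $-x$ direction (where $|\vo|$ drops by $\delta$, directly cancelling $\Delta\lon = \delta$). Verifying the $\sqrt{3}$-cancellation in every combination of motion direction, $f$-regime, and offline alignment mode, together with handling regime crossings (when $h$ crosses $0$ or $|\vo|$ within a step) and checking continuity of $\Phi$ at the rook-to-bishop transition (when $|\vo|$ reaches zero), will be the bulk of the technical work.
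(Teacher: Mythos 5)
Your plan is essentially the paper's own proof: the paper establishes the non-decrease of $\Phi$ by exactly the decomposition you describe --- infinitesimal axis-parallel request moves, split by phase and direction (its Lemmas~\ref{lem:basic}--\ref{lem:two-ver} and Corollaries~\ref{cor:one-ver}, \ref{cor:two-hor-right}), with a sub-case analysis on the offline server's alignment, artificial request points inserted at regime crossings of $f$, and re-orientation handled by Remark~\ref{rem:reorient}. Your spot-checked identities (the $\Delta\Phi=0$ configurations in the bishop phase and in the rook phase with the offline $y$-aligned) are precisely the paper's tight cases, and your reading of $\vo$ as continuously tracked during the bishop phase is the one the paper implicitly uses. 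The one place your route is weaker is the endgame: you bound the leftover $f$-term by $(6-2\sqrt{3})|\vo|$ and call it a ``cycle-local additive error,'' but $|\vo|$ at a mid-cycle stopping point is not an input-independent constant, so this does not quite give the stated ratio under the strict definition; the paper instead appends an artificial ``L''-shaped request sequence with $\popt$ at its corner, which drives $f$ to zero without increasing $\lopt$ (only $\lon$ grows), after which $\Phi\ge 0$ directly yields $\lon\le(3+2\sqrt{3})\lopt$.
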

We first provide a series of  lemmas that lead to the proof of Theorem~\ref{thm:main}.
\begin{lemma}\label{lem:basic}
If the online server stays still, $\Phi$ does not decrease.
\end{lemma}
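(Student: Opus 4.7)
The plan is to observe that when the online server stays still, $\pon$, $\lon$, $\vo$, and $|\vo|$ are all constant, so only $\lopt$ and terms depending on $\popt$ can change. Consequently,
\[
\Delta \Phi = (3+2\sqrt{3})\,\Delta \lopt - 3\,\Delta \d + \Delta f.
\]
Using the observation from the start of Section~\ref{sec:analysis} about inserting artificial points, I would decompose the offline server's motion into infinitesimal axis-aligned pieces; since the $L_1$ length $\lopt$, the $L_1$ distance $\d$, and $h = \popty - \pony$ (on which $f$ depends) all split additively across horizontal and vertical components, it suffices to check $\Delta\Phi \ge 0$ separately for an infinitesimal horizontal move and for an infinitesimal vertical move of $\popt$ of length $d\ell$.

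For a horizontal move, $f$ is unaffected because $h$ is unchanged, and $|\Delta \d| \le d\ell$ by the triangle inequality. Thus $\Delta\Phi \ge (3+2\sqrt{3})\,d\ell - 3\,d\ell = 2\sqrt{3}\,d\ell \ge 0$. This case is immediate.

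For a vertical move, I would exploit the fact that $\vo$ is horizontal during the rook phase (and $\vo = \vec 0$ during the bishop phase, in which $f \equiv 0$), so the $y$-component of $\d$ is exactly $|h|$; hence $\Delta \d = \pm d\ell$ according to whether $|h|$ grows or shrinks. Six sub-cases arise from the three linear pieces of $f$ crossed with the two directions of motion, and any move that crosses a breakpoint ($h=0$ or $h=|\vo|$) splits into sub-intervals each covered by one of these cases. The pivotal sub-case is a downward move with $0 < h \le |\vo|$, where $\Delta \d = -d\ell$ and $\Delta f = -(6-2\sqrt{3})\,d\ell$, giving
\[
\Delta\Phi = (3+2\sqrt{3})\,d\ell + 3\,d\ell - (6-2\sqrt{3})\,d\ell = 4\sqrt{3}\,d\ell \ge 0.
\]
The five remaining sub-cases each yield a strictly larger lower bound, namely one of $2\sqrt{3}\,d\ell$, $6\,d\ell$, or $(6+2\sqrt{3})\,d\ell$.

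The main obstacle will be the bookkeeping around the piecewise definition of $f$: one has to check each linear piece and each direction of motion cleanly, and keep careful track of the sign of $h$ when evaluating $\Delta|h|$. The conceptual content is that the constants $3+2\sqrt{3}$ in front of $\lopt$ and $6-2\sqrt{3}$ in the middle branch of $f$ are calibrated precisely so that the ``downward into the $f$-slope'' sub-case balances; indeed $4\sqrt{3} = (3+2\sqrt{3}) + 3 - (6-2\sqrt{3})$ is the tightest relation, and this single inequality is essentially what forces the $\sqrt{3}$ appearing in the algorithm's competitive ratio. Once the case analysis on $f$ is in place, the rest of the argument is routine.
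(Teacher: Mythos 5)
Your proof is correct and follows essentially the same route as the paper's terse argument: observe that $\pon$, $\lon$, and $\vo$ are frozen so only $\lopt$, $\d$, and $f$ can change, then case-split on the optimal server's direction of motion. The paper simply lists the three cases (optimal stays, moves horizontally by an arbitrary amount, moves vertically with the request) and asserts $\Phi$ is non-decreasing in each; you supply the missing arithmetic, which is genuinely useful since the vertical sub-cases interact with the piecewise $f$. One factual slip in your commentary, however: the sub-case yielding $4\sqrt{3}\,d\ell$ is not the tightest of the six, and the other five are not ``strictly larger.'' The horizontal case and the vertical sub-cases ``$h<0$, moving down'' and ``$h>|\vo|$, moving up'' each give only $2\sqrt{3}\,d\ell$, and the ``$0<h<|\vo|$, moving up'' sub-case gives $6\,d\ell$; both $2\sqrt{3}$ and $6$ are smaller than $4\sqrt{3}\approx 6.93$. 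So $4\sqrt{3}$ is neither the minimal slack nor the constraint that calibrates the constants --- the relations that actually pin down $3+2\sqrt{3}$ and $6-2\sqrt{3}$ are the $\Delta\Phi=0$ cases in Lemmas~\ref{lem:one-hor} and \ref{lem:two-ver}, not anything in this lemma, where every case has strictly positive slack.
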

\begin{proof}
Note that $\vo$ remains unchanged when the online server stays still. Also, the optimal server either (i) does not move, (ii) moves horizontally (arbitrary distance) or (iii) moves vertically the same distance that the request moves. In all three cases, $\Phi$ does not decrease.
\qed\end{proof}

\begin{corollary}\label{cor:one-ver}
From Lemma~\ref{lem:basic}, it follows that, in the bishop phase,  $\Phi$ does not decrease when request moves vertically.
\end{corollary}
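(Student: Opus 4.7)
The plan is to reduce this directly to Lemma~\ref{lem:basic} by showing that, during the bishop phase, a purely vertical movement of the request leaves the online server motionless. First I would unpack the bishop phase description: the server is required to maintain $x$-alignment with the request trajectory, and its $-y$ movement is tied to its $x$ movement via the rule that for every $\delta x$ the server moves horizontally, it simultaneously moves $|\delta x|$ in $-y$. A vertical movement of the request does not change the request's $x$-coordinate, so the $x$-alignment constraint is already satisfied without any server motion; consequently, no $-y$ component is triggered either. Therefore the online server remains stationary for the duration of such a request movement.

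Next I would note that the offset $\vo$ is not changed by the bishop phase itself (the algorithm specifies that $\vo$ is set only at the end of the bishop phase and subsequently evolves only in the rook phase), so nothing internal to $\Phi$ shifts due to the algorithm's own action. With the online server idle, the hypothesis of Lemma~\ref{lem:basic} is exactly met, and the conclusion $\Phi$ does not decrease transfers verbatim.

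The only potential subtlety — and the single step I would want to state carefully — is that Lemma~\ref{lem:basic} is phrased for arbitrary behavior of the request and offline server during an interval where the online server is still; it already enumerates the three possibilities for the offline response (no move, horizontal move, vertical move matching the request). Since the request is moving vertically here, we are inside case (i) or (iii) of that lemma, and no new case analysis is needed. I do not expect any real obstacle: the corollary is essentially a bookkeeping observation that the bishop phase's motion rule is driven solely by horizontal request movement, so vertical request movement falls under the already-handled ``server stays still'' regime of Lemma~\ref{lem:basic}.
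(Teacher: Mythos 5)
Your proposal is correct and matches the paper's intended argument: the paper gives no separate proof since the corollary is just the observation that the bishop phase's server motion rule is triggered exclusively by horizontal request motion, so a vertical request move leaves the online server stationary and Lemma~\ref{lem:basic} applies directly. One small overclaim worth trimming: you say we must be in case~(i) or~(iii) of Lemma~\ref{lem:basic}, but nothing prevents the offline server from making a (wasteful) horizontal move while the request moves vertically; this is harmless since the lemma already covers all three cases, but the restriction is unnecessary.
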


We define the offset halfplane to be the halfplane $x \le \ponx$. Naturally, its complement is $x > \ponx$. Since $\ponx$ can change as the online server moves, the offset halfplane also changes accordingly.

\begin{corollary}\label{cor:two-hor-right}
From Lemma~\ref{lem:basic}, it follows that, in the rook phase,  $\Phi$ does not decrease when request moves horizontally in the complement of the offset halfplane.
\end{corollary}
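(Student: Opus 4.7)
The plan is to reduce the corollary directly to Lemma~\ref{lem:basic} by showing that, under the hypotheses, the online server in fact does not move at all. Recall that during the rook phase the server maintains (i) $y$-alignment with the request trajectory and (ii) the inequality $\ponx \le $ (current $x$-coordinate of the request). A purely horizontal motion of the request in the complement of the offset halfplane $\{x > \ponx\}$ leaves its $y$-coordinate unchanged, so $y$-alignment requires no motion of the server. Moreover, so long as the request stays strictly in the complement, the $x$-inequality is satisfied with a strict gap, and the rook-phase rule explicitly keeps $\ponx$ fixed in that situation.

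The second step is to make this ``no motion'' statement cover the entire horizontal segment. A priori the request could approach the offset boundary $x = \ponx$ during its horizontal move; but the corollary hypothesizes that the motion stays in the complement of the offset halfplane. Using the artificial-point insertion observation from the start of the analysis, I can split any horizontal motion at the instant its $x$-coordinate first equals $\ponx$, and apply the corollary only to the portion strictly inside $\{x > \ponx\}$. On that portion the online server is stationary, and by the update rule for $\vo$ the vector $\vo$ is also unchanged.

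With the online server and $\vo$ both frozen, exactly the hypotheses of Lemma~\ref{lem:basic} are met: the optimal server can (i) stay put, (ii) move horizontally by an arbitrary amount, or (iii) move vertically by the distance the request moves (which is zero here, since the request is moving horizontally). The conclusion of Lemma~\ref{lem:basic} then gives $\Phi$ non-decreasing, which is exactly the statement of the corollary.

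There is essentially no technical obstacle here; the only mild subtlety is making sure the ``complement of the offset halfplane'' is interpreted as an open half-plane so that the rook-phase rule cleanly prescribes no server motion, and that the splitting at the boundary $x = \ponx$ is handled by the artificial-point trick rather than asking Lemma~\ref{lem:basic} to cover the boundary case.
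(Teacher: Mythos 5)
Your proposal is correct and matches the paper's intended argument exactly: the rook-phase rule keeps the online server (and hence $\vo$) fixed while the request moves horizontally with $x$-coordinate strictly greater than $\ponx$, so the statement is an immediate instance of Lemma~\ref{lem:basic}. The paper gives no further detail beyond citing that lemma, and your handling of the boundary $x = \ponx$ via the artificial-point insertion is a reasonable (if strictly optional) tidying of the same reduction.
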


\begin{remark}\label{rem:reorient}
At the start of each cycle, the axes of the euclidean plane can be redrawn (orthogonally) without changing $\Phi$.
\end{remark}
\begin{proof}
At the start of each cycle, \offset is 0. Therefore, only the first three terms of Equation~\ref{eq:pot} are non-zero. Those three terms do not change if the axes are redrawn orthogonal to the previous axes.\qed
\end{proof}

In the rest of the lemmas, since we can insert new points into the request sequence,  we show that $\Phi$ does not decrease for small $\e$ moves of the request in the direction specified.

\begin{lemma} \label{lem:one-hor}
In the bishop phase, $\Phi$ does not decrease when the request moves a distance $\e$ in the horizontal direction.
\end{lemma}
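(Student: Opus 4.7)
The plan is to reduce $\Delta\Phi\ge 0$ to a case split on the offline server's infinitesimal response, in the style of Lemma~\ref{lem:basic}. During the bishop phase $\vo$ vanishes, so both the $-|\vo|$ term and $f(|\vo|,\popt,\pon)$ drop out of $\Phi$, leaving $\Phi = (3+2\sqrt{3})\lopt - 3\,d(\pon,\popt) - \lon$. The bishop-phase rule prescribes that when the request moves a distance $\e$ horizontally, the online server moves $(\pm\e,-\e)$: the $x$-component matches the request to preserve $x$-alignment, and the $-\e$ in $y$ is the mandatory $45^\circ$ descent. Thus $\Delta\lon = 2\e$ and the lemma reduces to
$$(3+2\sqrt{3})\,\Delta\lopt - 3\,\Delta d(\pon,\popt) \;\ge\; 2\e.$$

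The body of the argument is a case split on the offline's response, following Lemma~\ref{lem:basic}: (i) $\popt$ stays put, which is consistent with a horizontal request move only when $\popt$ is $y$-aligned with the request; (ii) $\popt$ shifts horizontally by some $u$; (iii) $\popt$ shifts vertically by some $v$. For (ii) and (iii), $\Delta\lopt = |u|+|v|$ while the triangle inequality gives $|\Delta d(\pon,\popt)| \le |u|+|v|$ for the offline's part, so the offline contribution to $\Delta\Phi$ is at least $\bigl((3+2\sqrt{3})-3\bigr)(|u|+|v|) = 2\sqrt{3}\,(|u|+|v|) \ge 0$. The crux is therefore case (i): with $\popt$ frozen I must show that the online's diagonal step alone yields $\Delta d(\pon,\popt) \le -2\e/3$.

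For case (i) I would exploit the bishop-phase invariant that the online sits at or above the request trajectory in $y$ (the phase only terminates when they coincide). Since $\popt$ is $y$-aligned with the request, $\popty$ equals the request's $y$-coordinate, hence $\popty\le\pony$, and the online's $-\e$ vertical step shrinks $|\pony-\popty|$ by exactly $\e$. For the $x$-component I would use the re-orientation freedom at the cycle's start (Remark~\ref{rem:reorient}) to orient $+x$ so that $\popt$ lies on the same side of the request trajectory as the prescribed horizontal step; the $\pm\e$ move then shrinks $|\ponx-\poptx|$ by an additional $\e$, yielding $\Delta d = -2\e$ and $(3+2\sqrt{3})\cdot 0 - 3\cdot(-2\e) = 6\e \ge 2\e$.

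The main obstacle is defending the ``favorable side'' claim in the previous paragraph: while the initial re-orientation pins $\popt$ to a particular half-plane, the request can oscillate horizontally during the bishop phase, so the relative $x$-sign of $\popt$ and the request trajectory can in principle flip. I expect to handle this by arguing that any such flip either requires a simultaneous offline motion (which places the argument back in case (ii) with its strictly positive $2\sqrt{3}\,|u|$ contribution), or can be amortized against the immediately preceding step already handled by Corollary~\ref{cor:one-ver}, so that the net $\Delta\Phi$ over any such amortization unit is still non-negative.
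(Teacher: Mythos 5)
Your proof rests on the claim that ``during the bishop phase $\vo$ vanishes,'' but this is incorrect and invalidates the whole argument. The offset is $0$ only at the very \emph{start} of a cycle (the start of the bishop phase, when the online server sits at $(0,h)$). As the bishop phase progresses, the offset is the vector $-s_x\vec{x}$ that the phase would report if it ended at the server's current location $(s_x,s_y)$; it grows and shrinks with every horizontal step. The paper's own proof of this lemma explicitly tracks $\Delta|\vo|=\pm\e$ for each horizontal request move, and the terms $-|\vo|$ and $f(|\vo|,\popt,\pon)$ do not drop out. If one instead held $\vo\equiv 0$ through the bishop phase, the potential would have to jump discontinuously (possibly downward by as much as $4|\vo|$) at the moment the algorithm switches to the rook phase, which is not accounted for anywhere --- so that interpretation cannot be right.

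This error also hides why the ``favorable side'' gap you flag at the end is not really there in the correct analysis. The potential measures $d(\pon+\vo,\popt)$, not $d(\pon,\popt)$, and during the bishop phase $\pon+\vo=(0,s_y)$ sits on the $y$-axis of the re-oriented frame at all times. A horizontal request step moves $\pon+\vo$ straight \emph{down} by $\e$, regardless of whether the request (and hence $\pon$) went left or right; the left/right choice affects only $|\vo|$, not the $x$-coordinate of $\pon+\vo$. So in the case where $\popt$ is frozen and $y$-aligned, $\d$ decreases by exactly $\e$, $|\vo|$ changes by $\pm\e$, $\lon$ increases by $2\e$, and $\Delta\Phi = 3\e - 2\e \mp \e \ge 0$, with equality when $|\vo|$ grows --- nothing like the slack of $4\e$ your version reports. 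Your proposed patch (that a sign flip forces offline motion, or can be amortized against Corollary~\ref{cor:one-ver}) does not hold: the request can oscillate horizontally with $\popt$ stationary and $y$-aligned, and Corollary~\ref{cor:one-ver} concerns \emph{vertical} request moves, which are a different event. You would need to redo the case analysis with the true potential, tracking $\vo$, to make this lemma go through.
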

\begin{proof}
We treat this proof in cases based on the behavior of the optimal offline algorithm.
\begin{description}
	\item[Case: $\popt$ is unchanged. ] This is only possible if $\popt$ and request are $y$-aligned. $\lopt$ is unchanged. $\lon$ increased by $2\e$. $|\vo|$ has changed by at most $\e$. $f=0$ because $h \le 0$. If the request moves in the same direction as $\vo$, then $|\vo|$ decreases by $\e$. $\d$ decreased by $\e$.  Overall, $\Phi$ does not decrease (see Fig.~\ref{l2c1}).
	\item[Case: $\popt$ moves vertically and aligns with request.] This is a composition of Lemma~\ref{lem:basic} and the previous case (see Fig.~\ref{l2c2}).
	\item[Case: $\popty \le \pony$ and $\popt$ and request are $x$-aligned for the duration of the move.]
	$\lopt$ increases by $\e$. If request moves in the same direction as $\vo$, then $|\vo|$ decreases by $\e$ and $\d$ decreases by $2\e$, otherwise, $|\vo|$ increases by $\e$ and $\d$ is unchanged. $\lon$ increases by $2\e$. $f$ remains at 0. Therefore, $\Phi$ does not decrease (see Fig.~\ref{l2c3}).
	\item[Case: $\popty \ge \pony$ and $\popt$ and request are $x$-aligned  for the duration of the move.] $\lopt$ increases by $\e$. If request moves in the same direction as $\vo$, then $|\vo|$ decreases by $\e$ and $\d$ remains unchanged. Otherwise, $|\vo|$ increases by $\e$ and $\d$ increases by $2\e$. $\lon$ increases by $2\e$. $h$ in $f$ increased by $\e$ (see Fig.~\ref{l2c4}).
	
	The easy case is when $|\vo|$ decreases. We assume that either $|\vo| \ge h$ or $|\vo| \le h$. Otherwise, we can insert a request when the change happens. With either option, the change in $f$ term is positive and since $|\vo|$ decreases, one can work out that $\Phi$ increases.
	
	When  $|\vo|$ increases, the analysis tightens. The $f$ term increases by $(6 - 2 \sqrt{3}) \e$ because $|\vo|$ and $h$  also increase by $\e$. Therefore, $\Delta \Phi = (3+2 \sqrt{3}) \e - 6 \e - 2 \e - \e + (6 - 2 \sqrt{3}) \e = 0$.

	\item[Case: $\popt$ and request are $x$-aligned  for the duration of the move.] In this case, we are not restricting the relative locations of $\popt$ and $\pon$. In particular,  $\pon \ge \popt$ first, then after some point, the inequality is interchanged. If we insert a request at that point, then, this case breaks into the previous two cases.
\end{description}\qed
\begin{figure}[ht]
\begin{minipage}{4cm}
\center
\include{l2c1}
\caption{Case: $\popt$ is unchanged}
\label{l2c1}
\end{minipage} \hfill
\begin{minipage}{4cm}
\center
\include{l2c2}
\caption{Case: $\popt$ moves vertically}
\label{l2c2}
\end{minipage} \hfill
\begin{minipage}{4cm}
\center
\include{l2c3}
\caption{Case: $\popty \le \pony$}
\label{l2c3}
\end{minipage} \hfill
\begin{minipage}{4cm}
\center
\include{l2c4}
\caption{Case: $\popty \ge \pony$}
\label{l2c4}
\end{minipage} \hfill
\end{figure}
\end{proof}

\begin{lemma}\label{lem:two-hor-left}
In the rook phase,  $\Phi$ does not decrease when request moves horizontally into the offset halfplane.
\end{lemma}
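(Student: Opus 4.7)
The plan is to mirror the case analysis of Lemma~\ref{lem:one-hor}, now specialized to a rook-phase horizontal move of the request into the offset halfplane. Without loss of generality, $\vo$ points in the $-x$ direction (so the offset halfplane is $x \le \ponx$); suppose the request moves an infinitesimal distance $\e$ in the $-x$ direction. The rook phase invariants force the online server to move with it by $\e$ in $-x$, and the update rule decreases $|\vo|$ by $\e$. The key structural observation is that $\vo_x$ thereby increases by $+\e$ while $\ponx$ decreases by $\e$, so the sweet-spot point $\pon + \vo$ is invariant; and since $\Delta \lon = \e$ and $\Delta |\vo| = -\e$, the quantity $-\lon - |\vo|$ is also unchanged. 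Hence it suffices to show
\[
(3 + 2\sqrt{3})\, \Delta \lopt - 3\, \Delta \d + \Delta f \;\ge\; 0.
\]

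I would split on the behavior of $\popt$, exactly as in Lemma~\ref{lem:one-hor}. If $\popt$ stays still, continuous alignment with a horizontally moving request forces $\popty$ to equal the request's $y$-coordinate, which equals $\pony$ by the rook phase $y$-alignment invariant; so $h = 0$ and $f = 0$. Since $\popt$ and $\pon + \vo$ are both unchanged, $\Delta \d = 0$ and thus $\Delta \Phi = 0$. If instead $\popt$ maintains $x$-alignment with the request, then $\popt$ also shifts by $-\e$ in $x$, contributing $\Delta \lopt = \e$; and (for $|\vo| \ge \e$) the $x$-component of $\d$ drops by $\e$ while the $y$-component is preserved, so $\Delta \d = -\e$, and $h$ is unchanged. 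Any $\popt$-trajectory that switches alignment during the move decomposes at the switch instant into sub-moves of these two types.

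Within this second case, I would further split on the regime of $f$. If $h \le 0$ or $0 \le h \le |\vo|$, the $f$ term is unchanged throughout, and a direct calculation gives $\Delta \Phi = (3 + 2\sqrt{3})\e + 3\e = (6 + 2\sqrt{3})\e > 0$. The tight subcase is $|\vo| \le h$, where $f = (6 - 2\sqrt{3})|\vo|$ shrinks by $(6 - 2\sqrt{3})\e$; the contributions combine to
\[
\Delta \Phi = (3 + 2\sqrt{3})\e + 3\e - (6 - 2\sqrt{3})\e \;=\; 4\sqrt{3}\,\e \;>\; 0.
\]
Regime transitions (when $h$ crosses $0$ or $|\vo|$) are handled by inserting an artificial request point at the transition, using the observation at the start of Section~\ref{sec:analysis}.

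The hardest part will be the tight subcase $|\vo| \le h$: the drop in $f$ almost exactly cancels the positive contributions from $\Delta \lopt$ and $-3\Delta \d$, and only the specific coefficient $6 - 2\sqrt{3}$ (tuned together with the target competitive ratio $3 + 2\sqrt{3}$) yields a positive margin of $4\sqrt{3}\,\e$. The remaining bookkeeping---correctly identifying the three regimes of $f$ and decomposing $\popt$'s alignment switches---is routine given the point-insertion trick.
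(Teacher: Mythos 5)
Your proof is correct and follows essentially the same structure as the paper's: the same case split on $\popt$'s behavior, the same observation that the rook-phase update rule (decreasing $|\vo|$ by exactly $\e$) leaves $\pon + \vo$ and $-\lon - |\vo|$ invariant, and the same arithmetic leading to $\Delta\Phi = 0$ in the $y$-aligned case and $\Delta\Phi \ge 4\sqrt{3}\,\e$ in the $x$-aligned case (the paper uses the single bound $\Delta f \ge -(6-2\sqrt{3})\e$ where you split explicitly on the three regimes of $f$, but the conclusion is identical). One small note: the paper's middle case, where $\popt$ makes a vertical move before becoming $y$-aligned, is handled by composing Lemma~\ref{lem:basic} with the stay-still case; your phrase about ``decomposing at the switch instant'' should be understood to include this $\popt$-only sub-move, which requires Lemma~\ref{lem:basic} rather than either of your two stated sub-move types.
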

\begin{proof}
As the request moves a distance $\e$, the online server goes with it. (So, the request does not enter the offset halfplane, but rather pushes it by a distance $\e$.) Therefore, $|\vo|$ decreases by $\e$.
\begin{description}
	\item[Case: $\popt$ stays still. ] Clearly, $\popt$ is $y$-aligned with the request. $\lopt$ and $\d$ are unchanged, but $\lon$ increases by $\e$. Since $\popt$ and $\pon$ are $y$-aligned, $f = 0$. Recall that $|\vo|$ decreases by $\e$. Therefore, $\Phi$ is unchanged (see Fig.~\ref{l3c1}).
	\item[Case: $\popt$ makes a vertical move after which, $\popt$ and request are $y$-aligned. ] We can assume that $\popt$ made the jump first before $\pon$ moved along with the request. From Lemma~\ref{lem:basic}, $\Phi$ does not decrease when $\popt$ jumped. $\pon$ moving along with the request is handled by the previous case (see Fig.~\ref{l3c2}).
\item[Case: $\popt$ makes an $x$-aligned move.] $\lopt$ and $\lon$ increase by $\e$. $\d$ decreased by $\e$. Since $|\vo|$ decreased by $\e$, $f$ decreases at most by $(6 - 2 \sqrt{3})\e$ (see Fig.~\ref{l3c3}). Therefore,
	\[\Delta(\Phi) \ge (3+2\sqrt{3}) \e + 3\e + \e - \e - (6-2\sqrt{3}) \e \ge 0.\quad\quad\quad\quad\quad\quad\qed
\]
\end{description}\end{proof}
\begin{figure}[ht]
\begin{minipage}{5cm}
\center
\include{l3c1}
\caption{Case: $\popt$ stays still}
\label{l3c1}
\end{minipage} \hfill
\begin{minipage}{5cm}
\center
\include{l3c2}
\caption{Case: $\popt$ moves vertically}
\label{l3c2}
\end{minipage} \hfill
\begin{minipage}{5cm}
\center
\include{l3c3}
\caption{Case: $\popt$ makes an $x$-aligned move}
\label{l3c3}
\end{minipage} \hfill
\end{figure}

\begin{lemma}\label{lem:two-ver}
In the rook phase, $\Phi$ does not decrease when request moves vertically.
\end{lemma}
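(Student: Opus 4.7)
The plan is to fix a small vertical motion of size $\e$ by the request and track each term of $\Phi$. Since the rook phase maintains $y$-alignment, the online server moves vertically by $\e$ as well, giving $\Delta\lon = \e$, and the offset-update rule of the rook phase gives $\Delta|\vo| = -(1+\sqrt 3)\e$. Any purely horizontal leg of $\popt$'s motion during this time can be isolated by the artificial-point trick (insert an intermediate request so that $\popt$'s horizontal sub-move happens while the online server is still) and is then covered by Lemma~\ref{lem:basic}. So I only need to analyze two geometric configurations of $\popt$ during the actual request motion: $\popt$ stationary, or $\popt$ moving vertically in lock-step with the request.

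\textbf{Case A: $\popt$ stationary.} The request moves vertically but $\popt$ remains aligned, which forces $\poptx = r_x$. Combined with the rook-phase invariant $\ponx \le r_x$, this yields $\ponx + \vo_x \le \poptx$, so the horizontal component of $\d$ shrinks by exactly $(1+\sqrt 3)\e$. I further split into subcases according to (i) whether the request moves up or down, (ii) the sign of $h = \popty - \pony$, and (iii), when $h > 0$, whether $h \le |\vo|$ or $h \ge |\vo|$. Substituting $\Delta\lopt = 0$, $\Delta\lon = \e$, $\Delta|\vo| = -(1+\sqrt 3)\e$, and the appropriate $\Delta\d$ and $\Delta f$ yields $\Delta\Phi \ge 0$ in every subcase, and the tight one is the downward move with $h \ge |\vo| > 0$: there $\Delta\d = -\sqrt 3\,\e$ and $\Delta f = -(6-2\sqrt 3)(1+\sqrt 3)\e = -4\sqrt 3\,\e$, which add up to $\Delta\Phi = 0$.

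\textbf{Case B: $\popt$ tracks the request vertically.} Here $\popty = r_y = \pony$ before and after the move, so $h = 0$ throughout and $\Delta f = 0$. I have $\Delta\lopt = \e$, while $\Delta\d$ is purely horizontal. If $\poptx \le \ponx + \vo_x$ then $\Delta\d = -(1+\sqrt 3)\e$ and $\Delta\Phi$ is comfortably positive; if $\poptx \ge \ponx + \vo_x$ then $\Delta\d = (1+\sqrt 3)\e$ and a direct substitution gives $\Delta\Phi = (3+2\sqrt 3)\e - 3(1+\sqrt 3)\e + \sqrt 3\,\e = 0$, again tight.

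The main obstacle is the bookkeeping for $f$: it is piecewise with break points at $h = 0$ and $h = |\vo|$, and both $h$ and $|\vo|$ move during the step, so a single $\e$-step can straddle a boundary. I handle this by inserting an artificial request point exactly at the boundary crossing, so every step lies within one piece of $f$ and the computation reduces to a linear substitution. Once this is done, the verification in each subcase is routine, and the specific constants $(1+\sqrt 3)$ chosen in the offset-update rule and $(6-2\sqrt 3)$ chosen in $f$ are precisely what makes the coefficients cancel to yield $\Delta\Phi = 0$ in the two tight subcases identified above, consistent with the competitive ratio $3 + 2\sqrt 3$ of the main theorem.
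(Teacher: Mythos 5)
Your proof is correct and follows essentially the same route as the paper's: the same case decomposition on $\popt$'s behavior (stationary and $x$-aligned versus tracking the request vertically), the same reduction of $\popt$'s horizontal moves to Lemma~\ref{lem:basic}, the same splitting at the break points of $f$ via artificial request points, and the same arithmetic identifying the same two tight subcases. One cosmetic slip in Case B: since decreasing $|\vo|$ moves $\pon+\vo$ in the $+x$ direction, $\d$ grows when $\poptx \le \ponx + \vo_x$ and shrinks when $\poptx \ge \ponx + \vo_x$ (you have the labels swapped), but as you verify $\Delta\Phi \ge 0$ for both signs of $\Delta\d$ the argument is unaffected.
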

\begin{proof}
Note that all vertical moves of $\e$ distance in the rook phase decrease $|\vo|$ by $(1 + \sqrt{3}) \e$.
\begin{description}
	\item[Case: $\popty \le \pony$ and $\popt$ is $x$-aligned and therefore does not move.] $\lopt$ is obviously unchanged, but $\lon$ increases by $\e$. $\d$ decreased by at least $(1 + \sqrt{3}) \e - \e = \sqrt{3}\e$. Since $h \le 0$,  $\Delta(f) = 0$. Therefore, $\Delta(\Phi) \ge 3 \sqrt{3} \e + \sqrt{3} \e > 0$ (see Fig.~\ref{l4c1}).
	\item[Case: $\popty \ge \pony$ and $\popt$ is $x$-aligned, so it does not move. $\pon$ and request move up by $\e$.]
	As in the previous case, $\lopt$ remains unchanged, but $\lon$ increases by $\e$. Also, $|\vo|$ decreased by $(1 + \sqrt{3})\e$. $\d$ decreased by $(1 + \sqrt{3}) \e + \e = 2\e+ \sqrt{3}\e$. Both $h$ and $|\vo|$ decreased, so $f$ decreased as well by at most $(1 + \sqrt{3})(6 -2 \sqrt{3}) \e$ (see Fig.~\ref{l4c2}). Therefore,
	
\begin{equation*}
	\Delta(\Phi) \ge 3(2+ \sqrt{3})\e - \e + (1 + \sqrt{3}) \e - (6-2\sqrt{3})(1 + \sqrt{3}) \e \ge 0
\end{equation*}

\item[Case: $\popt$ is still, but $\pon$ and request start below $\popt$, move up and cross over to above $\popt$.]
This is simply a composition of the above two cases, so $\Phi$ does not decrease.
	\item[Case: $\popty \ge \pony$ and $\popt$ is $x$-aligned, so it does not move. $\pon$ and request move down by $\e$.] $\lopt$ is unchanged, but $\lon$ increases by $\e$. $|\vo|$ decreased by $(1 + \sqrt{3})\e$. $\d$ decreased by $(1 + \sqrt{3}) \e - \e = \sqrt{3}\e$. While $h$ increases, $|\vo|$ decreased. Therefore, $f$ might decrease, but at most by $(1 +\sqrt{3})(6-2\sqrt{3}) \e$. Therefore, $\Delta(\Phi) \ge 3\sqrt{3}\e + \sqrt{3} \e - (1 + \sqrt{3})(6-2\sqrt{3}) \e = 0$ (see Fig.~\ref{l4c3}).

	\item[Case: $\popt$ starts out $y$-aligned, but it $x$-aligns itself to the request with a horizontal move.] This case can be viewed as the composition of two parts. $\popt$ moves first and $\Phi$ does not decrease (by Lemma~\ref{lem:basic}). Then, $\popt$ stays still, but request and server move up. This is the previous case. Hence, $\Phi$ does not decrease (see Fig.~\ref{l4c4}).
	\item[Case: $\popt$ moves vertically (up or down) and stays $y$-aligned.] $\lopt$ and $\lon$ increase by $\e$. $|\vo|$ decreases by $(1 + \sqrt{3})\e$, but $\d$ increases by at most $(1 + \sqrt{3})\e$. Finally, $f$ remains at 0. Therefore, $\Delta(\Phi) \ge (3+ 2 \sqrt{3}) \e - 3(1 + \sqrt{3}) \e - \e + \e + \sqrt{3}\e = 0$ (see Fig.~\ref{l4c5}).\qed
\end{description}\end{proof}
\begin{figure}[ht]
\begin{minipage}{5cm}
\center
\include{l4c1}
\caption{Case: $\popty \le \pony$ and $\popt$ is $x$-aligned}
\label{l4c1}
\end{minipage} \hfill
\begin{minipage}{5cm}
\center
\include{l4c2}
\caption{Case: $\popty \ge \pony$ and $\popt$ is $x$-aligned. $\pon$ moves up}
\label{l4c2}
\end{minipage} \hfill
\begin{minipage}{5cm}
\center
\include{l4c3}
\caption{Case: $\popty \ge \pony$ and $\popt$ is $x$-aligned. $\pon$ moves down}
\label{l4c3}
\end{minipage} \hfill
\begin{minipage}{5cm}
\center
\include{l4c4}
\caption{Case: $\popt$ $x$-aligns with a horizontal move}
\label{l4c4}
\end{minipage} \hfill
\begin{minipage}{5cm}
\center
\include{l4c5}
\caption{Case: $\popt$ moves vertically}
\label{l4c5}
\end{minipage} \hfill
\end{figure}
\begin{proof}[of Theorem~\ref{thm:main}] $\Phi$ started at 0 and, from Lemmas \ref{lem:basic}, \ref{lem:one-hor}, \ref{lem:two-hor-left}, \ref{lem:two-ver}, and Corollary~\ref{cor:two-hor-right}, we know that it only increased. Without loss of generality, we can assume that we terminate at the end of the rook phase, at which point, the $f$ function will evaluate to 0. If we terminate at some other point in the cycle, $f$ might be non-zero. For the purpose of analysis, we can perform a simple trick to bring $f$ to zero without increasing $\lopt$. In particular,  we artificially move the request repeatedly in an ``L" shaped manner with $\popt$ at the corner. $\pon$ will home in on this corner point as well and once it coincides with the corner, $f$ will become zero without incurring any increase in $\lopt$. Since $\Phi = (3 + 2 \sqrt{3}) \lopt - 3 d(\pon + \vo, \popt) - \lon - |\vo| \ge 0$, and $\d$ and $|\vo|$ are non negative, $(3 + 2 \sqrt{3}) \lopt \ge \lon$. \qed
\end{proof}

\begin{remark} The analysis of our algorithm is tight, i.e. there are infinite sequences of requests for which $\lon = (3+2\sqrt{3}) \lopt$. \label{rem:tight}
\end{remark}
\begin{proof}
We provide two different sequence of input. Our first sequence starts with  $\popt$ at the origin. $\pon$ and request are at $(0,1)$ and we are at the beginning of the bishop phase. Then request  moves to $(0,0)$ and then to $(1,0)$.  $\pon$ makes a diagonal move to  $(1,0)$, but $\popt$ does not move. So, $\lon$ has increased by 2, but $\lopt$ stays unchanged. At this moment our algorithm is in the beginning of the rook phase with $\vo = -\mathbf{x}$, where $\mathbf{x}$ is the unit vector along $x$-axis. Request  moves from $(1,0)$ to $(1,\frac{1}{1+\sqrt{3}})$; $\popt$ moves together with it from $(0,0)$ to
	    	    $(0,\frac{1}{1+\sqrt{3}})$. According to our algorithm $\pon$ should follow after the request and at the end of its                move $\vo$ becomes $0$. Thus we get to the bishop phase of our algorithm. $\lopt$ and $\lon$ have both increased by $\frac{1}{1+\sqrt{3}}$. Note that we are back in the situation that we started, so the adversary can repeat this sequence ad infinitum. In each cycle, $\lopt$ has increased by $\frac{1}{1+\sqrt{3}}$ and $\lon$
increased by  $2+\frac{1}{1+\sqrt{3}}$ length in total. Therefore, $\frac{\lon}{\lopt} = (2+\frac{1}{1+\sqrt{3}})(1+\sqrt{3})=3+2\sqrt{3}$.      	 
\qed
\end{proof}

In the proof of Remark~\ref{rem:tight}, we provide two tight examples (although the second example is deferred to the Appendix) to indicate how $\Phi$ balances between multiple scenarios. We feel that minor adjustments to $\Phi$ will not reduce the competitive ratio.
%

\section*{Acknowledgment}
We are grateful to Ning Chen, Edith Elkind, Sachin Lodha, Srinivasan Iyengar, Sasanka Roy and Dilys Thomas for useful discussions and ideas.

\bibliographystyle{plain}

\newpage
\section*{Appendix}

\begin{proof}[of Claim \ref{clm:general}]
Recall that, in the orthogonal CNN problem, any two consecutive pairs of requests must share either the same $x$ or $y$ coordinate. Given an online input sequence $I_{ortho} = (p_1, p_2, \ldots)$, we must construct an online sequence of requests $I_{cont}$ for the continuous CNN problem. When a new request $p_i$ arrives in $I_{ortho}$, we construct the next request in $I_{cont}$ as follows: $(p_{i-1}, \vd_{i-1})$, where $\vd_{i-1} = \frac{p_i - p_{i-1}}{|p_i - p_{i-1}|}$ is the unit vector in new direction. Clearly, the request trajectories in both the orthogonal and the continuous CNN problem instances are exactly the same. Therefore, any online algorithm $\mathcal{A}$ for $I_{cont}$ can also be used for $I_{ortho}$. \qed
\end{proof}
%
For the purpose of this proof, we restrict requests to appear in an orthogonal manner, i.e., each request (except the first one) shares either the same $x$ or $y$ coordinate as the previous request.
%
%
%

\paragraph{Second tight instance for Remark~\ref{rem:tight}:} The second sequence is as follows. Both $\popt$ and $\pon$ be at $(0,1)$ in the beginning of the bishop phase. The request is at the origin. Then request  moves to $(1,0)$ and $\pon$  moves to $(1,0)$ and $\lon$ increases by $2$; $\popt$ moves to $(1,1)$. At this moment our algorithm is in the beginning of the rook phase with $\vo = -\mathbf{x}$. Request moves from $(1,0)$ to $(1,-\frac{1}{1+\sqrt{3}})$; OPT stays still. Server follows after the request untile $\vo = 0$. Thus we get to the bishop phase of the algorithm. Then request moves to $(1,1)$; $\popt$ and $\pon$ stay. Then, request  moves to  $(-\frac{1}{1+\sqrt{3}},1)$ and finally returns back to $(1,1)$; $\popt$ stays still; $\pon$ moves to $(1,1)$ at a  total length $3(1+\frac{1}{1+\sqrt{3}})$. At this point,  we are again at the beginning of the bishop phase. Then request moves to (say) $(1,0)$ and we are back to the starting situation, i.e., we are in the bishop phase with $\popt$ and $\pon$ coinciding  and are at unit distance away from the request. Hence, this cycle can be repeated ad infinitum. Clearly, $\frac{\lon}{\lopt} = 4(1+\frac{1}{1+\sqrt{3}})+1 = 3+2\sqrt{3}$.

\section*{The Unit CNN Problem}\label{sec:unit}
\begin{proof}[of Claim~\ref{clm:4c}]
We now provide a 4-competitive online algorithm. Our algorithm works in cycles. In each cycle, the online algorithm pays at most \$4 . Therefore, to prove that the algorithm is 4-competitive, we have to simply show that the offline optimal algorithm must pay \$1 per cycle. The intuition behind the algorithm is as follows. In each cycle, for the offline algorithm to avoid paying, there must be a {\em sweet spot} in $\R^2$ such that if the offline server were located there, it would not have to move throughout the cycle. The goal of the online algorithm is to discover such a position (if it exists) and reach it in at most four \$1 steps. The next cycle starts either (i) when the online algorithm establishes that there is no sweet spot (so the offline server has moved), or (ii) when the offline server cannot serve a request from the sweet spot (again, requiring the offline server to move).

Formally, the algorithm works as follows.
Assume that we are at the start of a cycle and the first request in that cycle, $r_1$, has arrived. We also assume that the offline algorithm has positioned its server in the advantageous sweet spot. The online algorithm pays \$1 (if required) and aligns with the $x_1$. We assume that $r_2$ does not share the same $x$ coordinate with $r_1$. If it does, the server need not move and it can be discarded from the input sequence (for analysis purposes). When $r_2 = (x_2, y_2)$ arrives, the online server moves to $(x_1, y_2)$ and serves $r_2$. If $y_1 = y_2$, then the sweet spot (if it exists) is somewhere on $y = y_1 = y_2$; this is case A. Otherwise, the sweet spot is either $(x_1, y_2)$ or $(x_2, y_1)$; this is case B.

Suppose we are in case A. Then we assume that $r_3 = (x_3, y_3)$ does not share the $y$ coordinate with $r_2$; if it did, the online server need not move and $r_3$ can be discarded for analysis purposes. The sweet spot must be $(x_3, y_1)$. The online algorithm can reach it in one step and stays there till a request that it cannot serve from $(x_3, y_1)$ arrives. The cycle is over.

Suppose we are in case B. If the third event $r_3 = (x_3, y_3)$ does not share either an $x$ or $y$ coordinate with $r_1$ or $r_2$, then clearly, there cannot be a sweet spot; the cycle is over. Suppose, instead, that $r_3$ shares the $x$ coordinate with $r_2$; other subcases can be seen symmetrically. Then, the sweet spot is $(x_2, y_1)$. Then, the online algorithm pays \$2 and moves to $(x_2, y_1)$. Again it stays there until forced to move when the cycle is over. Claim \ref{clm:4c} (stated in Section~\ref{sec:prelim}) follows in a straightforward manner.
Furthermore, there are instances for which the competitive ratio 4 can be realized, but we defer their description to the full version.
\qed\end{proof}


\begin{proof}[of Claim~\ref{clm:3c}]
Suppose the sequence of events is guaranteed to be orthogonal, i.e., each request shares a coordinate with the previous request. Then, we provide a very simple and tight algorithm with competitive ratio 3. The algorithm is very simple. The online server does not move unless the event is not visible to it. In that situation, it moves to the last event that it could see. By the orthogonality condition, we know that this algorithm is correct. Our claim that this algorithm is 3-competitive is along the same lines as the previous theorem, only simpler.


We claim that for every consecutive sequence of moves worth $\$3$ in our algorithm, OPT has to do at least one \$1 move.
The proof is similar to that of Claim~\ref{clm:4c}. Consider four consecutive orthogonal requests $(r_0, r_1, r_2, r_3)$; note that adjacent requests must be at distinct locations. At the start of the cycle, both online and offline servers are in some position to serve $r_0$. We assume for the sake of contradiction that the offline algorithm has positioned itself so it does not have to move for the next three requests. The online server will move from its current position to $r_0 \to r_1 \to r_2$ to serve $r_1$, $r_2,$ and $r_3$, respectively. For the offline optimal algorithm to have served this sequence without moving, it must be in a position to see all four requests. The only candidate for the first three requests is $r_1$, while the only candidate for the second three requests is $r_2$. Since we require adjacent requests to be distinct, this is a contradiction. \qed
\end{proof}




\end{document}